\theoremstyle{plain}
\newtheorem{thm}{Theorem}[section]
\newtheorem{prop}[thm]{Proposition}
\theoremstyle{definition}
\theoremstyle{remark}
\newtheorem{oss}{Observation}
\title{Distortions of Robertson-Walker metric in perturbative cosmology and interpretation as dark matter and cosmological constant}
\author{Federico Re$^*$}
\begin{document}

\maketitle

$^*$DiSAT, Insubria University, Via Valleggio 11, Como, Italy; and INFN, via Celoria 16, 20133, Milano, Italy. fre@uninsubria.it

\begin{center}
	\textbf{Abstract}
\end{center}
In the last years, we saw more and more attempts to explain dark matter as a general relativistic effect, at least for some fraction. Following this philosophy, we considered the gravitational distortions due to the inhomogeneous distribution of matter in the universe, which we know from general relativity to be retarded distortions. This provides a magnification effect, since the distortions we feel now depend not on the present matter density, but on the past one, which is greater.  The expansion rate of the universe is perturbed as well, in a not negligible way, despite matter inhomogeneities are small, because of the same magnification effect. The deceleration parameter, which is a way to evaluate the quantity of dark matter in the universe, is perturbed in turn, so that the real amount of dark matter is less than what is usually believed.

\section{Introduction}

At the present days, the search of dark matter didn't give conclusive results. The hot dark matter, assuming it exists, cannot explain local effects as galaxy rotation curves, since its high mean velocity forbids links to local structures \cite{HDM}. MaCHOs give a contribution to dark matter, but only for a small fraction, according to recent estimates \cite{Alcock:1995dm}, \cite{machos}. The search of WIMPs, conjectured as supersymmetric partners or other kinds of particles, produced no results up to now \cite{Craig:2013cxa}. The state of art suggests that the dark matter is less than what we expect from astrophysical phenomena, but, rather, our Cosmological Standard Model should be modified in some way. This possibility is studied, for example, by the MOND theories, but no attempt to modify the present theory of gravitation given rise to a good matching with data, currently \cite{Aguirre:2001xs}, \cite{mond1}, \cite{mond2}. This situation leads to propose a new kind of answer, which doesn't need some form of unobservable matter, neither modifications to Einstein's theory of gravitation, but explores unusual consequences of the usual theory.

We can roughly divide the “dark matter phenomena” in two categories: global dark matter effects, which consist in unexpected values of cosmological parameters (the deceleration parameter of expansion of universe \cite{perlmutter}, \cite{riess}, the deuterium abundance, the power spectrum of CMB anisotropies, etc.), and local dark matter effects, that arise from observations of astronomical objects (the galaxies rotation curves \cite{rotation curve}, the virial of galaxy clusters \cite{clusters}, gravitational lensing \cite{g lensing}, etc.). Both of them are exclusively gravitational phenomena, anomalies of gravitation with respect to what we expect. This doesn't mean necessarily that there exists some kind of invisible matter, generating the observed gravitation. From general relativity, we know that the gravitational potential is the space-time metric. If we find a general relativistic explanation of the unexpected distortion of this tensor, even without a presence of real matter, we would have the explanation of some dark matter effects. After all, the formalism usually used to show the dark matter effects is not truly general relativistic. Especially for local effects, it is usually used the newtonian approximation; for global ones, the usual model of universe expansion is a friedmanian one with matter and curvature assumed homogeneous, which could be an excessive simplification.

In the last twenty years, several lines of research have been opened that seek to study some essentially relativistic effects in cosmology. One of these started form deep considerations about the coordinates used in general relativity \cite{Alba:2006ea}, \cite{Lusanna:2006wx}, \cite{Lusanna:2009fd}, \cite{Lusanna:2010pb}, \cite{Lusanna:2011rm}, \cite{Lusanna:2012kx}. These have consequences both for global effects, due to different time coordinates in use, and for local ones, due to an off diagonal contribution in the metric tensor for the variables we use in astronomical observation, which can modify the galaxies rotation curves and gives some contributions to the dark matter phenomena \cite{Balasin:2006cg}. This was recently confirmed by observations on our galaxy \cite{Crosta:2018var}.

Another line of research explores the backreaction effect, which means the difference between the spatial curvature due to an averaged quantity of matter, and the averaged spatial curvature due to an inhomogeneous quantity of matter \cite{Buchert:2001sa}, \cite{Vigneron:2019dpj}; it doesn't vanish in general since the Einstein Equations are not linear. This effect was proposed to be also an explanation of the cosmological constant \cite{Buchert:2007ik}. Indeed, also our necessity of a cosmological constant is due to exclusively gravitational phenomena. However, if we try to divide the cosmological constant phenomena in global and local effects, similarly to the dark matter ones, we don't find any local effect. Our knowledge of the cosmological constant comes only from the universe expansion and the measures of its acceleration \cite{perlmutter}, \cite{riess}, which are global effects. This encourages attempts of explanation with averaging on large dominions.

A third and last line considers the retarded gravitational potential generated by the inhomogeneous, expanding distribution of matter in the universe \cite{sergio}. Since it is anisotropic, the Birkhoff Theorem doesn't hold and the central point is influenced by far objects. The potentials from far objects are retarded, so depend on the past matter density, which is greater than the actual one. The causal propagation in gravity, and so its retarded potentials, was recently confirmed by the observation of gravitational waves \cite{waves}. This provides a magnification effect on the total metric tensor, obtained as a superposition of all retarded potentials from all past times, which predictably have a singularity at the Big Bang time. However, there is also a reduction of the gravitational potential with the distance. It is necessary a precise calculation to see if it prevails the magnification or the reduction; in the first case, we would have an explanation of dark matter, as a distortion of the tensor metric not due to a proportional presence of matter. It is possible to approximate the retarded potentials and their superposition with linear perturbation theory, since at large scale the matter distribution is almost homogeneous \cite{homog}, so the inhomogeneities can be seen as small perturbations. This "retarded potential framework" can provide also local dark matter effects, since inhomogeneous matter generates inhomogeneous potentials. The anomalies in rotation curves and virials would be due to a statistical maximum of metric distortion around the galaxies and the clusters, which acts as an halo of effective dark matter. The correspondence of these maximums with the galaxies would not be a coincidence: during the formation of structure, the baryonic matter would fall into the gravitational wells, generated by far fields, so we would have automatically the formation of galaxies inside them.

However, the only paper that contains this kind of idea \cite{sergio} develops it in a very partial way. First of all, it doesn't work in a general relativistic context. They use a linearized gravity model on a minkowskian background, where the simulated expansion of the universe is imposed by hand. Only the matter itself moves, in a fixed Minkowski background, and an effective FRW metric is imposed with a "compatibility condition". These concepts are devoid of analogues in general relativity, where rather there should be a background dynamics and a perturbed one, whose difference is interpretable with some amount of fictitious dark matter. Moreover, in the article \cite{sergio} a relevant role is attribute to a particular fractal distribution of matter. In the present article we will prove that,  in a full general relativistic description, what plays role in determining dark matter effects are the retarded effects and generic inhomogeneities, not necessarily of fractal nature.

The paper is structured as follows. In §2 we linearize the Einstein Equations with respect to a Robertson-Walker metric, assumed spatially flat to simplify all calculations; it is performed the formalism of perturbative cosmology, but we are interested in the development of the metric rather than the one of matter structures, so we choose the gauge used typically for gravitational waves. This allows us to obtain the equation for gravitational wave on an expanding universe, in §3, and so implicitly the perturbed metric. In this work, we focus on global dark matter effects, so we take the average of this perturbed metric, obtaining a general algorithm for the amount of fictitious matter due to the inhomogeneity of the real one. Even if the effect we find is from averaging the metric tensor, it is not a backreaction effect, since there is no spatial curvature in any stadium of calculation, so the backreaction of our model would be zero. Our effect is complementary to backreaction. In §4 we solve the wave equations for a particularly simple case, which however is not qualitatively different from the real universe. In §5 we obtain an explicit formula for the apparent matter depending on the fraction of the real matter inhomogeneously distributed, which results to be not negligible.

\medskip
\textbf{Notations}. In this article is used the most minus signature and natural units, so that $c=1$. The Ricci tensor is defined as $R_{\mu\nu}=\partial_{\sigma}\Gamma^{\sigma}_{\mu\nu}-\partial_{\nu}\Gamma^{\sigma}_{\mu\sigma}+\Gamma^{\sigma}_{\sigma\rho}\Gamma^{\rho}_{\mu\nu}-\Gamma^{\sigma}_{\mu\rho}\Gamma^{\rho}_{\nu\sigma}$, with $\Gamma^{\lambda}_{\mu\nu}=\frac{1}{2}g^{\lambda\rho}(\partial_{\mu}g_{\nu\rho}+\partial_{\nu}g_{\rho\mu}-\partial_{\rho}g_{\mu\nu})$.

Any quantity has an "unperturbed", or "averaged", version $\overline{Q}$ calculated from the background metric $\overline{g}_{\mu\nu}(\tau)$, and a "perturbed", or "true", one $Q$ calculated from the real metric $g_{\mu\nu}(\tau;\underline{x})$. Its "perturbation" is the difference $\tilde{Q}:=Q-\overline{Q}$.

$\tau$ is the conformal unperturbed time, with allowed values $(\tau_I;\tau_F):=\{\tau\in\mathbb{R}|a(\tau)>0\}$, where $a(\tau)$ is the unperturbed expansion parameter; so that $\overline{t}=\int a(\tau)d\tau$ is the usual, unperturbed time. $H(\tau)$ is the Hubble parameter for the unperturbed model. $\tau$, $a$ and $H$ are written without the overline, with notation abuse, for a better readability. Their perturbed versions will be $\textbf{a}$, $\textbf{H}$ and so on.

$\tau_0$ is the actual instant in the unperturbed model, s.t. $a(\tau_0)=1$. $t_0$ is the present in the true model, s.t. $\textbf{a}(t_0)=1$. The "$0$" label means an evaluation in the present time.

$\hat{V}$ is the divergenceless part of a vector $V$.

\section{Linearized Einstein Equations}

Let us apply the perturbative approach to a Robertson-Walker metric, looking for a wave equation for the perturbation. As usual \cite{perturb4}, \cite{oliver}, we add a small perturbation to an unperturbed metric:
\begin{equation}
	g_{\mu\nu}(\tau;\underline{x})=\overline{g}_{\mu\nu}(\tau)+\tilde{g}_{\mu\nu}(\tau;\underline{x})=a(\tau)^2\left(\begin{matrix}
		1 & \vec{0} \\
		\vec{0} & -\delta_{ij}
	\end{matrix}\right)+a(\tau)^2\left(\begin{matrix}
		2A & -\vec{B} \\
		-\vec{B} & h_{ij}
	\end{matrix}\right).
\end{equation}
\begin{oss}
\label{back1}
	 In order to keep all calculations as simple as possible, we assume a spatially flat background. As we will see later (Observation \ref{back2}), also the perturbation won't have spatial curvature. On any compact support $\mathcal{D}$, the spatial second fundamental form $K_{ij}$ will have a constant trace $\theta:=-K_{ij}g^{ij}$ and an identically zero shear tensor $\sigma_{ij}:=-K_{ij}-\frac{1}{3}\theta g_{ij}\equiv0$. So the kinematical backreaction, see \cite{Buchert:2001sa}, §2.1 and §3.2, will be always zero:
	
	 $\tilde{\mathcal{Q}}_{\mathcal{D}}:=\frac{2}{3}\langle\left(N\theta-\langle N\theta\rangle_{\mathcal{D}}\right)^2\rangle_{\mathcal{D}}-2\langle N^2\frac{1}{2}\sigma^i_j\sigma^j_i\rangle_{\mathcal{D}}\equiv0$.

	 In any case, the backreaction is a second-order quantity, so it never can be found in a calculation at first order, as the our one is.
\end{oss}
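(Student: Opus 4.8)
The plan is to read the kinematic quantities directly off the perturbed metric (1) in the gravitational-wave gauge announced in the introduction, and then to exhibit both terms of $\tilde{\mathcal{Q}}_{\mathcal{D}}$ as quadratic in the perturbation amplitude. First I would foliate spacetime by the constant-$\tau$ hypersurfaces and record the ADM data: in this gauge, where $A=0$, $\vec{B}=0$ and $h_{ij}$ is divergenceless and traceless, the lapse is $N=a$, the shift vanishes, and the induced spatial metric is $\gamma_{ij}=a^2(\delta_{ij}-h_{ij})$. The extrinsic curvature is then $K_{ij}=-\frac{1}{2N}\partial_\tau\gamma_{ij}$, which I would expand to first order in $h_{ij}$.

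Next I would extract the trace $\theta=-K_{ij}g^{ij}$. Because $h_{ij}$ is traceless, the first-order contributions proportional to $h^i_i$ and $\partial_\tau h^i_i$ cancel, leaving $\theta=3H$ up to terms quadratic in the perturbation; hence $\theta$ is spatially constant at first order, $N\theta-\langle N\theta\rangle_{\mathcal{D}}$ is at most second order, and the variance term $\frac{2}{3}\langle(N\theta-\langle N\theta\rangle_{\mathcal{D}})^2\rangle_{\mathcal{D}}$ is negligible. For the shear $\sigma_{ij}=-K_{ij}-\frac{1}{3}\theta g_{ij}$ I would isolate the trace-free part of $K_{ij}$: the trace piece proportional to $\gamma_{ij}$ drops out and the only survivor is $\sigma_{ij}\propto\partial_\tau h_{ij}$, a first-order, traceless tensor that vanishes on the background. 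Consequently the shear term $2\langle N^2\frac{1}{2}\sigma^i_j\sigma^j_i\rangle_{\mathcal{D}}$, being quadratic in $\sigma_{ij}$, is again of second order. Both contributions to $\tilde{\mathcal{Q}}_{\mathcal{D}}$ are therefore products of first-order amplitudes and vanish in a consistently linearized scheme — the gauge-independent clincher already recorded in the last sentence of the Observation.

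The delicate point I expect to require the most care is the literal assertion $\sigma_{ij}\equiv0$: a tensor mode in the gravitational-wave gauge carries a nonzero instantaneous shear $\sigma_{ij}\propto\partial_\tau h_{ij}$, so the vanishing is not pointwise. The honest resolution is that $\sigma_{ij}$ enters $\tilde{\mathcal{Q}}_{\mathcal{D}}$ only through the quadratic invariant $\sigma^i_j\sigma^j_i$, so it is the \emph{backreaction}, and not the shear itself, that vanishes at first order. I would therefore present the result by leaning on the order-counting observation as the decisive step, using the explicit evaluation of $\theta$ and $\sigma_{ij}$ only to confirm that no first-order quantity can leak into $\tilde{\mathcal{Q}}_{\mathcal{D}}$, rather than to claim the identical vanishing of each kinematic scalar.
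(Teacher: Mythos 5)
Your decisive step --- that $\tilde{\mathcal{Q}}_{\mathcal{D}}$ is a sum of squares of first-order quantities, hence invisible in a consistently linearized calculation --- is exactly the argument the paper itself invokes in the closing sentence of the Observation, and it is the only part of the statement that is rigorous pointwise. To that extent your proposal and the paper's reasoning coincide, and your refusal to claim the identical vanishing of each kinematic scalar is sound.

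The genuine flaw is the gauge you compute in. The ``gauge used typically for gravitational waves'' in this paper is the harmonic condition $\tilde{\Gamma}^{\lambda\mu}_{\mu}=0$ of \S 2, \emph{not} the transverse--traceless gauge: the scalar modes cannot be gauged away here, since they are sourced by $\tilde{\rho}$ and are the whole object of the paper, and the solution actually adopted has $h_{ij}=2C\delta_{ij}$ (pure trace, the opposite of traceless) together with $A\neq 0$ and a nonzero shift $\propto\vec{\nabla}B$. In that metric your intermediate cancellations fail: $N\theta$ carries genuine first-order spatial fluctuations, $N\theta=3H-3\dot{C}+\nabla^{2}B+O(2)$ up to signs, which the harmonic condition does not remove, so $N\theta-\langle N\theta\rangle_{\mathcal{D}}$ is first order rather than ``at most second order'' as you assert; likewise the shear acquires the first-order piece $\sigma_{ij}\propto\left(\partial_{i}\partial_{j}-\frac{1}{3}\delta_{ij}\nabla^{2}\right)B$ coming from the shift, not from $\partial_{\tau}h_{ij}$. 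Consequently the variance term of $\tilde{\mathcal{Q}}_{\mathcal{D}}$ is second order only by the generic square-of-first-order counting (your TT-based reasoning would make it fourth order), and the paper's literal claims of constant $\theta$ and $\sigma_{ij}\equiv 0$ are exact only for the background and for the averaged metric $\langle g_{\mu\nu}\rangle$, not pointwise for the perturbed metric of Observation \ref{back2}. So your conclusion stands, but it stands only on the gauge-independent order counting; the supporting computation would have to be redone with the scalars $A$, $B$, $C$ retained before it could be spliced into this paper.
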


As about the energy-momentum tensor, we set
\begin{equation}
	T_{\mu\nu}(\tau;\underline{x})=(\rho+p)U_{\mu}U_{\nu}-pg_{\mu\nu}=\overline{T}_{\mu\nu}(\tau)+\tilde{T}_{\mu\nu}(\tau;\underline{x})
\end{equation}
where the energy density $\rho=\overline{\rho}+\tilde{\rho}$, the pressure $p=\overline{p}+\tilde{p}$ and the four-velocity field $U_{\mu}=a\delta_{\mu\tau}+\tilde{U}_{\mu}$, in general.

The unperturbed Ricci tensor and the unperturbed Einstein tensor are
\begin{equation}
	\overline{R}_{\mu\nu}=\left(\begin{matrix}
		-3\dot{H} & \vec{0} \\
		\vec{0} & (\dot{H}+2H^2)\delta_{ij}
	\end{matrix}\right), \; \overline{G}_{\mu\nu}=\left(\begin{matrix}
		3H^2 & \vec{0} \\
		\vec{0} & -(2\dot{H}+H^2)\delta_{ij}
	\end{matrix}\right).
\end{equation}
The unperturbed Einstein Equations are nothing but the Friedman equations
\begin{equation}
\label{mean einstein}
	\begin{cases}
		3H^2=8\pi Ga^2\overline{\rho} \\
		\dot{\overline{\rho}}=-3H(\overline{\rho}+\overline{p})
	\end{cases}.
\end{equation}
For any superposition of cosmic components
\begin{center}
	$\overline{\rho}=\sum_w\rho_w$, $\overline{p}=\sum_wp_w=\sum_ww\rho_w$
\end{center}
and choosing as variables
\begin{center}
	$\overline{\Omega}_w(\tau):=\frac{\rho_w(\tau)}{\overline{\rho}(0)}$ s.t. $\overline{\rho}(0)=\frac{3H(0)^2}{8\pi G}$, so $\sum_w\overline{\Omega}_w(\tau_0)=1$
\end{center}
we obtain the usual differential equation for the universe expansion:
\begin{equation}
\label{fried}
	\left(\frac{\dot{a}}{H_0}\right)^2=\sum_w\overline{\Omega}_{w0}a^{1-3w}.
\end{equation}
This (\ref{fried}) is the friedmanian model for a general homogeneous universe, for any choice of kind of components $\{w\}$ and of their relative fraction today $\{\overline{\Omega}_{w0}\}$.

The only component we perturb is matter, for which $w=0$, so $\tilde{\rho}=\tilde{\rho}_0$, $\tilde{p}=0\cdot\tilde{\rho}=0$ and $p=\overline{p}$.

After performing the scalar-vector-tensor decomposition of the metric
\begin{equation}
	\begin{matrix}
		\vec{B}:=\vec{\nabla}B+\hat{B}, \; h_{ij}:=2C\delta_{ij}+2\left(\partial_{ij}-\frac{1}{3}\delta_{ij}\nabla^2\right)E+(\partial_i\hat{E}_j+\partial_j\hat{E}_i)+2\hat{E}_{ij} \\
	s.t. \; \vec{\nabla}\cdot\hat{B}=0, \; \vec{\nabla}\cdot\hat{E}=0, \; \sum_j\partial_j\hat{E}_{ij}=0, \; \sum_j\hat{E}_{jj}=0
	\end{matrix}
\end{equation}
we can express the perturbation of the Ricci tensor as:
\begin{align}
\label{ricci non gauge}
	\tilde{R}_{\mu\nu}&:=R_{\mu\nu}-\overline{R}_{\mu\nu}+o(\tilde{g}_{\mu\nu}), \; s.t. \cr
	&\tilde{R}_{\tau\tau}=(\nabla^2A+\nabla^2\dot{B}+3\ddot{C})+H(3\dot{A}+\nabla^2B+3\dot{C}) \cr
	&\tilde{R}_{\tau j}=\left(-\frac{1}{2}\nabla^2\hat{B}_j+\frac{1}{2}\nabla^2\partial_jB+2\partial_j\dot{C}-\frac{1}{2}\nabla^2\dot{\hat{E}}_j\right)-2H\partial_jA+(\dot{H}+2H^2)B_j \cr
	&\tilde{R}_{ij}=\left(\frac{1}{2}\Box h_{ij}-\partial_{ij}A-\partial_{(i}\dot{\hat{B}}_{j)}+\partial_{ij}C-\nabla^2\partial_{(i}\hat{E}_{j)}\right)+\cr
	&-H(\dot{A}\delta_{ij}+\nabla^2B\delta_{ij}+2\partial_{(i}\hat{B}_{j)}+3\dot{C}\delta_{ij}+\dot{h}_{ij})-(\dot{H}+2H^2)(2A\delta_{ij}+h_{ij})
\end{align}
where we call $\partial_{(i}v_{j)}:=\frac{1}{2}(\partial_iv_j+\partial_jv_i)$.

We can use the geometric condition $g_{\mu\nu}U^{\mu}U^{\nu}=1=\overline{g}_{\mu\nu}\overline{U}^{\mu}\overline{U}^{\nu}$, to get the perturbation of velocities as
\begin{equation}
	\tilde{U}^{\mu}=a^{-1}\left(\begin{matrix}
		-A \\
		\vec{v}
	\end{matrix}\right), \; \tilde{U}_{\mu}=a\left(\begin{matrix}
		A & -\vec{v}-\vec{B}
	\end{matrix}\right)
\end{equation}
and so, remembering $\tilde{p}=0$, the perturbation of stress-energy tensor is
\begin{equation}
	\tilde{T}_{\mu\nu}=a^2\left(\begin{matrix}
		\tilde{\rho}+2\overline{\rho}A & -\vec{q}-\overline{\rho}\vec{B} \\
		-\vec{q}-\overline{\rho}\vec{B} & -p h_{ij}
	\end{matrix}\right),
\end{equation}
where $\vec{v}(\tau;\underline{x})$ is the field of spatial velocities, and we defined $\vec{q}:=(\overline{\rho}+p)\vec{v}$.

We want now to deduce the equations for the retarded potentials. To this end, we fix the harmonic gauge, usually convenient for studying gravitational waves. Abstracting from the background metric, the harmonic condition on the perturbation of connection is
\begin{equation}
	\tilde{\Gamma}^{\lambda\mu}_{\mu}=0.
\end{equation}
We obtain a scalar and a vector condition on $A$, $\vec{B}$, $h_{ij}$:
\begin{equation}
	\begin{cases}
		\dot{A}+\nabla^2B+3\dot{C}+4HA=0 \\
		\vec{\nabla}A+\dot{\vec{B}}-\vec{\nabla}C+\nabla^2\vec{E}+2H\vec{B}=0
	\end{cases}.
\end{equation}
In this gauge, the second order part of $\tilde{R}_{\mu\nu}$ is a flat d'alambertian $\Box:=\eta_{\mu\nu}\partial^{\mu}\partial^{\nu}$. Indeed, from (\ref{ricci non gauge}):
\begin{equation}
	\tilde{R}_{\mu\nu}=\left[\frac{1}{2}\Box-H\partial_{\tau}-2(\dot{H}+H^2)\right]\left(\begin{matrix}
		2A & -\vec{B} \\
		-\vec{B} & h_{ij}
	\end{matrix}\right)+\dot{H}\left(\begin{matrix}
		0 & \vec{0} \\
		\vec{0} & h_{ij}-2A\delta_{ij}
	\end{matrix}\right).
\end{equation}
This is what we are looking for, because predictably the linearized Einstein Equations will have the form of wave equations.

Following again the gravitational waves formalism, we express the Einstein Field Equations as
\begin{equation}
	R_{\mu\nu}=8\pi GS_{\mu\nu}=8\pi G\left(T_{\mu\nu}-\frac{1}{2}T^{\lambda}_{\lambda}g_{\mu\nu}\right).
\end{equation}
The perturbation of the $S$ tensor is
\begin{equation}
	\tilde{S}_{\mu\nu}=\frac{a^2}{2}\left(\begin{matrix}
		\tilde{\rho}+2(\overline{\rho}+3p)A & -2\vec{q}-(\overline{\rho}+3p)\vec{B} \\
		-2\vec{q}-(\overline{\rho}+3p)\vec{B} & \tilde{\rho}\delta_{ij}+(p-\overline{\rho})h_{ij}
	\end{matrix}\right)
\end{equation}
and so the linearized Einstein Equations, simplified using equations (\ref{mean einstein}), are
\begin{equation}
\label{sys}
	\begin{cases}
		\Box A-2H\dot{A}+2(\dot{H}-2H^2)A=4\pi Ga^2\tilde{\rho} \\
		\Box\vec{B}-2H\dot{\vec{B}}+2(\dot{H}-2H^2)\vec{B}=16\pi Ga^2\vec{q} \\
		\Box h_{ij}-2H\dot{h}_{ij}=4(\dot{H}A+2\pi Ga^2\tilde{\rho})\delta_{ij}
	\end{cases}.
\end{equation}
From them, we can derive the linearized conservation laws for energy and momentum:
\begin{equation}
	\begin{cases}
		\dot{\tilde{\rho}}+\vec{\nabla}\cdot\vec{q}+3H\tilde{\rho}-3(\overline{\rho}+p)\dot{C}=0 \\
		\dot{\vec{q}}+4H\vec{q}+(\overline{\rho}+p)(\vec{\nabla}C+\nabla^2\hat{E})+[(\dot{\overline{\rho}}+\dot{p})+2H(\overline{\rho}+p)]\vec{B}=0
	\end{cases}.
\end{equation}

\section{Retarded potentials and averaged perturbed metric}

The solutions in the vacuum of the PDE system (\ref{sys}) describe gravitational waves on an expanding space-time. For a given distribution of matter and velocities as source, the PDEs return the correspondent space-time metric. For a bounded distribution of matter, the solution without gravitational waves is such that the metric is asymptotically minkowskian, and we choose this solution as gravitational potential.

Similarly to the usual wave equation, the characteristic curves are light rays, and so the potentials are retarded accordingly to the speed of light.

\subsection{Simplification of wave equations and general form of retarded potentials}

First, we observe that $h_{ij}$ has no traceless source, hence we can choose a solution with $h_{ij}=2C\delta_{ij}$.

Moreover, let us decompose $\vec{q}:=\vec{\nabla}q+\hat{q}$. The divergenceless part $\hat{B}$ has only $\hat{q}$ as source and both of them are decoupled from the rest of the system\footnote{In the energy conservation, it appears only the divergence of $\vec{q}$. Since $h_{ij}$ is now a pure trace, the momentum conservation is $\dot{\vec{q}}+4H\vec{q}+(\overline{\rho}+p)(\vec{\nabla}C)+[(\dot{\overline{\rho}}+\dot{p})+2H(\overline{\rho}+p)]\vec{B}=0$. Decomposing it in a pure gradient and a divergenceless part, the dependence on $C$ survives only in the first one, while the second one links only $\hat{B}$ and $\hat{q}$, again.}. We choose to set both to zero, which means to consider an irrotational dust as inhomogeneous matter.

The system becomes
\begin{equation}
	\begin{cases}
		\Box A-2H\dot{A}+2(\dot{H}-2H^2)A=4\pi Ga^2\tilde{\rho} \\
		\Box B-2H\dot{B}+2(\dot{H}-2H^2)B=16\pi Ga^2q \\
		\Box C-2H\dot{C}=2\dot{H}A+4\pi Ga^2\tilde{\rho}
	\end{cases}.
\end{equation}
\begin{oss}
\label{back2}
	Let us consider now the total metric $g_{\mu\nu}=a^2\left(\begin{matrix}
		1+2A & -\vec{\nabla}B \\
		-\vec{\nabla}B & (2C-1)\delta_{ij}
	\end{matrix}\right)$. As we said in Observation \ref{back1}, the spatial part is flat, even in perturbation, so we won't have backreaction.
\end{oss}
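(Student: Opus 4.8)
The plan is to read off the spatial geometry of the constant-$\tau$ slices from the simplified metric of Observation~\ref{back2} and to confirm, in the sense announced in Observation~\ref{back1}, that it carries no backreaction. First I would isolate the spatial three-metric, $g_{ij}=a^2(2C-1)\delta_{ij}$, and stress that after the simplifications just performed its perturbation is \emph{purely} the isotropic trace mode $2C\delta_{ij}$: the traceless scalar, vector and tensor pieces $E$, $\hat{E}_j$, $\hat{E}_{ij}$ and the divergenceless shift $\hat{B}_j$ have all been set to zero, so the only surviving spatial distortion is a space- and time-dependent conformal rescaling of the flat background. This is the precise meaning I would attach to the phrase ``the spatial part is flat, even in perturbation'': there is no anisotropic shape distortion of the spatial slices, only an isotropic volume factor.

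Next I would connect this to Buchert's kinematical variables. The three-curvature of a conformally flat slice is, at first order, ${}^{(3)}R=\frac{4}{a^2}\nabla^2 C$; this need not vanish pointwise, but its domain average reduces to a boundary flux of $\nabla C$ that vanishes for a closed or unbounded $\mathcal{D}$, which is the averaged sense in which the slice is flat. For the extrinsic curvature I would build $K_{ij}$ from $\dot{g}_{ij}$ and the shift $\vec{\nabla}B$ and split it into its trace $\theta$ and traceless shear $\sigma_{ij}$. Since $\dot{g}_{ij}\propto g_{ij}$, the time derivative of the metric feeds only $\theta$; the sole anisotropic contribution is the Hessian $\partial_i\partial_j B$ of the shift, whose accompanying Christoffel terms are second order and drop at linear order. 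The essential bookkeeping is that at background level $N\theta=3aH$ is spatially constant and $\sigma_{ij}=0$, so both $N\theta-\langle N\theta\rangle_{\mathcal{D}}$ and $\sigma_{ij}$ are first-order perturbations.

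The decisive, foliation-independent step --- which I regard as the real content of the statement --- is the order count flagged at the end of Observation~\ref{back1}. The kinematical backreaction $\tilde{\mathcal{Q}}_{\mathcal{D}}$ is assembled entirely from the variance $\langle(N\theta-\langle N\theta\rangle_{\mathcal{D}})^2\rangle_{\mathcal{D}}$ and from $\langle N^2\sigma^i_j\sigma^j_i\rangle_{\mathcal{D}}$, and by the previous paragraph each factor in these two averages is a first-order quantity. Hence every term of $\tilde{\mathcal{Q}}_{\mathcal{D}}$ is quadratic in the perturbation and vanishes identically in a strictly linear computation. I expect the main obstacle to be exactly the shear analysis: upgrading ``$\sigma_{ij}\equiv 0$'' to a literal pointwise identity, rather than an averaged or background statement, would force a commitment to a particular slicing and a demonstration that the shift- and velocity-induced anisotropies are pure gauge, which is delicate and generically false for inhomogeneous dust. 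The order argument sidesteps this entirely, yielding $\tilde{\mathcal{Q}}_{\mathcal{D}}\equiv 0$ at first order regardless of the foliation, and I would build the proof around it.
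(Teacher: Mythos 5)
Your proposal is correct and reaches the conclusion the paper wants, but you distribute the logical weight differently from the paper --- and, in fact, more soundly. The paper's own justification (contained in Observation \ref{back1}) has two prongs: first, the exact pointwise claims that the trace of the spatial second fundamental form is constant and the shear vanishes identically, which would make $\tilde{\mathcal{Q}}_{\mathcal{D}}\equiv 0$ to all orders; second, the fallback remark that backreaction is quadratic in the perturbation and hence cannot appear in a first-order computation. You rest the entire proof on the second prong and explicitly decline to defend the first --- rightly so, since once the shift $-\vec{\nabla}B$ is present the traceless part of the Hessian $\partial_i\partial_j B$ produces a first-order shear $\sigma_{ij}\neq 0$, and the conformally flat slice $a^2(1-2C)\delta_{ij}$ has ${}^{(3)}R\simeq\frac{4}{a^2}\nabla^2 C\neq 0$ pointwise, so ``constant trace'', ``identically zero shear'' and ``flat spatial part'' are strictly background-level (or averaged, as your boundary-flux observation makes precise) statements, not pointwise ones. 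What the paper's leading claim would buy, were it literally true, is exact vanishing of $\tilde{\mathcal{Q}}_{\mathcal{D}}$ at every order; what your order count buys is a statement that is actually correct, foliation-independent, and entirely sufficient for the paper's purpose, since the whole calculation is linear. In short, you prove the observation by the paper's own fallback argument, after correctly diagnosing that its headline argument holds only at zeroth order.
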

All the wave equations have the PDE form
\begin{equation}
\label{PDE}
	\Box u+\mathcal{H}(\tau)\dot{u}+\mathcal{K}(\tau)u=\mathcal{S}(\tau;\underline{x}).
\end{equation}
Let $G(\tau, \underline{x}; \tau', \underline{x}')$ be its Green function\footnote{i.e. the asymptotically zero solution for a source $\delta(\tau-\tau')\delta^{(3)}(\underline{x}-\underline{x}')$}. It will be zero for $|\underline{x}-\underline{x}'|>\tau-\tau'$, because of causality. It is also spatially homogeneous and isotropic:
\begin{center}
	$G(\tau, \underline{x}; \tau', \underline{x}')=G(\tau, \underline{x}-\underline{x}'; \tau', \underline{0})=G(\tau, |\underline{x}-\underline{x}'|; \tau', \underline{0})$.
\end{center}
Assuming separation of variables for a generic source $\mathcal{S}(\tau; \underline{x})=T(\tau)\mathcal{S}_0(\underline{x})$, we can express the retarded potential as
\begin{equation}
	\begin{matrix}
		u(\tau;\underline{x})=\int_{\tau_I}^{\tau}d\tau'\int d^3\underline{x}' G(\tau, \underline{x}; \tau', \underline{x}')T(\tau')\mathcal{S}_0(\underline{x}')=\int_{|\underline{x}'-\underline{x}|<\tau-\tau_I}\mathcal{S}_0(\underline{x}')f(\tau; |\underline{x}'-\underline{x}|)d^3\underline{x}' \\
	s.t. \; f(\tau; |\underline{r}|):=\int_{\tau_I}^{\tau}G(\tau, \underline{r}; \tau', \underline{0})T(\tau')d\tau'.
	\end{matrix}
\end{equation}

The $f$ auxiliary quantity shows the superposition of all the retarded potentials generated by a point of the source at all times, from the Big Bang up to now. The resultant solution $u$ is again the superposition for all the causally linked points.

\subsection{Averaging the perturbed metric}

To study the global effects, we take the average of these potentials over all the space, obtaining a function depending only on time. We get
\begin{prop}
	\begin{equation}
	\label{average formula}
		\langle u\rangle(\tau)=4\pi\langle\mathcal{S}_0\rangle\int_0^{R(\tau)}f(\tau; r)r^2dr,
	\end{equation}
	where $R(\tau):=\tau-\tau_I$ is the radius of observable universe.
\end{prop}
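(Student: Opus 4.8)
The plan is to start from the explicit retarded-potential representation of $u$ established just above the statement, carry out the spatial average, and collapse the resulting double integral to a single radial one by exploiting the two structural features of the problem already recorded: the Green function depends on $\underline{x},\underline{x}'$ only through the displacement $\underline{x}'-\underline{x}$, and the homogeneity--isotropy that makes $f$ a function of $|\underline{r}|$ alone.

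First I would write the spatial average as $\langle u\rangle(\tau)=\frac{1}{V}\int_V u(\tau;\underline{x})\,d^3\underline{x}$ and insert the integral formula for $u$. Changing the inner variable to $\underline{r}:=\underline{x}'-\underline{x}$ recasts $u(\tau;\underline{x})$ as the convolution-type expression $\int_{|\underline{r}|<R(\tau)}\mathcal{S}_0(\underline{x}+\underline{r})\,f(\tau;|\underline{r}|)\,d^3\underline{r}$, whose integration domain is now a fixed ball independent of $\underline{x}$. This uses causality, which confines the support of $G$ to $|\underline{x}-\underline{x}'|\le\tau-\tau'\le\tau-\tau_I=R(\tau)$, so that the upper radial limit is exactly $R(\tau)$. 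I would then apply Fubini to exchange the $\underline{x}$-average with the $\underline{r}$-integral and pull $f(\tau;|\underline{r}|)$ outside the $\underline{x}$-average, since it carries no $\underline{x}$-dependence; the inner bracket becomes $\frac{1}{V}\int_V\mathcal{S}_0(\underline{x}+\underline{r})\,d^3\underline{x}$, the spatial average of a translate of $\mathcal{S}_0$.

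The crucial step, and the only point I expect to require genuine care, is to identify this bracket with $\langle\mathcal{S}_0\rangle$ independently of $\underline{r}$. This is the translation invariance of the averaging functional: substituting $\underline{y}=\underline{x}+\underline{r}$ turns the bracket into $\frac{1}{V}\int_{V+\underline{r}}\mathcal{S}_0(\underline{y})\,d^3\underline{y}$, which differs from $\langle\mathcal{S}_0\rangle$ only by contributions from the symmetric difference of $V$ and $V+\underline{r}$, a boundary layer of measure $O(|\underline{r}|\cdot\partial V)$ that is negligible relative to $V$ in the limit $V\to\infty$ (and vanishes identically if one averages over all space or adopts periodic boundary conditions). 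Once $\langle\mathcal{S}_0\rangle$ factors out, the remaining $\int_{|\underline{r}|<R(\tau)}f(\tau;|\underline{r}|)\,d^3\underline{r}$ is evaluated in spherical coordinates; because $f$ is radial the angular part integrates to $4\pi$, and I obtain $\langle u\rangle(\tau)=4\pi\langle\mathcal{S}_0\rangle\int_0^{R(\tau)}f(\tau;r)\,r^2\,dr$, which is the asserted formula.
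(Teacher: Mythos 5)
Your proposal is correct and follows essentially the same route as the paper's proof: insert the retarded-potential representation, shift to the displacement variable $\underline{r}$, exchange the average with the $\underline{r}$-integral, identify the translated average of $\mathcal{S}_0$ with $\langle\mathcal{S}_0\rangle$, and finish in polar coordinates. Your only departure is cosmetic and in fact slightly more careful: where the paper simply re-indexes the limit as $\lim_{\mathcal{D}+\underline{r}\nearrow\mathbb{R}^3}\langle\mathcal{S}_0\rangle_{\mathcal{D}+\underline{r}}$, you justify the translation invariance explicitly via the $O(|\underline{r}|\cdot|\partial V|)$ boundary-layer estimate.
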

\begin{proof}
	The average of a spatial quantity $Q(\underline{x})$ on some compact support $\mathcal{D}$ is defined as

	$\langle Q\rangle_{\mathcal{D}}:=\frac{1}{|\mathcal{D}|}\int_{\mathcal{D}}Q(\underline{x})d^3\underline{x}$.

	The average on the whole space is the limit of this quantity for a monotonically growing sequence of sets  $\mathcal{D}$, tending to $\mathbb{R}^3$:

	$\langle Q\rangle:=\lim_{\mathcal{D}\nearrow\mathbb{R}^3}\langle Q\rangle_{\mathcal{D}}$.

	Let us fix the time $\tau$. What we obtain immediately for $u$ is

	$\langle u\rangle(\tau):=\lim_{\mathcal{D}\nearrow\mathbb{R}^3}\frac{1}{|\mathcal{D}|}\int_{\mathcal{D}}u(\tau;\underline{x})d^3\underline{x}=$

	$=\lim_{\mathcal{D}\nearrow\mathbb{R}^3}\frac{1}{|\mathcal{D}|}\int_{\mathcal{D}}d^3\underline{x}\int_{|\underline{x}'-\underline{x}|<R(\tau)}d^3\underline{x}'\mathcal{S}_0(\underline{x}')f(\tau;|\underline{x}-\underline{x}'|)=$
	
	$=\lim_{\mathcal{D}\nearrow\mathbb{R}^3}\frac{1}{|\mathcal{D}|}\int_{\mathcal{D}}d^3\underline{x}\int_{|\underline{r}|<R(\tau)}d^3\underline{r}\mathcal{S}_0(\underline{r}+\underline{x})f(\tau;|\underline{r}|)=$

	$=\int_{|\underline{r}|<R(\tau)}f(\tau;|\underline{r}|)\left[\lim_{\mathcal{D}\nearrow\mathbb{R}^3}\frac{1}{|\mathcal{D}|}\int_{\mathcal{D}}d^3\underline{x}\mathcal{S}_0(\underline{r}+\underline{x})\right]d^3\underline{r}=$

	$=\int_{|\underline{r}|<R(\tau)}f(\tau;|\underline{r}|)\left[\lim_{\mathcal{D}+\underline{r}\nearrow\mathbb{R}^3}\langle\mathcal{S}_0\rangle_{\mathcal{D}+\underline{r}}\right]d^3\underline{r}=$

	$=\int_{r<R(\tau)}f(\tau;r)\langle\mathcal{S}_0\rangle 4\pi r^2dr$,

	which\footnote{We call $\mathcal{D}+\underline{r}:=\{\underline{x}+\underline{r}|\underline{x}\in\mathcal{D}\}$ in the fourth passage, and we changed the variables to polar in the fifth passage.} proves the proposition.
\end{proof}
For $B$ it is necessary a different procedure, since it is not a component of the metric, but their partial derivatives $\vec{\nabla}B$ are. Its average results to be zero:
\begin{prop}
	\begin{equation}
		\langle\vec{\nabla}B\rangle=0.
	\end{equation}
\end{prop}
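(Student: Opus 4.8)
The plan is to reduce the claim to the vanishing of the whole-space average of the gradient of the \emph{source}, exploiting the same convolution structure and averaging technique used for the previous Proposition. First I would write $B$ in the convolution form of a retarded potential: after the change of variable $\underline{r}:=\underline{x}'-\underline{x}$ in the general solution, one has
\begin{equation}
	B(\tau;\underline{x})=\int_{|\underline{r}|<R(\tau)}\mathcal{S}_0(\underline{r}+\underline{x})\,f(\tau;|\underline{r}|)\,d^3\underline{r},
\end{equation}
where $\mathcal{S}_0$ is the spatial profile of the source $16\pi Ga^2q$ of the second equation of the system, and $f$ the associated superposition kernel. Since the integration domain $|\underline{r}|<R(\tau)$ and the kernel $f(\tau;|\underline{r}|)$ do not depend on $\underline{x}$, the spatial gradient passes under the integral sign and falls entirely on the source:
\begin{equation}
	\vec{\nabla}B(\tau;\underline{x})=\int_{|\underline{r}|<R(\tau)}(\vec{\nabla}\mathcal{S}_0)(\underline{r}+\underline{x})\,f(\tau;|\underline{r}|)\,d^3\underline{r}.
\end{equation}

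Averaging over $\underline{x}$ and commuting the infinite-volume average with the $\underline{r}$-integral exactly as in the proof of (\ref{average formula}) — the translation $\underline{x}\mapsto\underline{x}+\underline{r}$ leaves the whole-space average invariant — I would obtain
\begin{equation}
	\langle\vec{\nabla}B\rangle(\tau)=\left(\int_{|\underline{r}|<R(\tau)}f(\tau;|\underline{r}|)\,d^3\underline{r}\right)\langle\vec{\nabla}\mathcal{S}_0\rangle.
\end{equation}
It then remains to show $\langle\vec{\nabla}\mathcal{S}_0\rangle=0$. For this I would apply the divergence theorem on each compact domain $\mathcal{D}$ of the exhausting sequence,
\begin{equation}
	\langle\partial_i\mathcal{S}_0\rangle_{\mathcal{D}}=\frac{1}{|\mathcal{D}|}\int_{\mathcal{D}}\partial_i\mathcal{S}_0\,d^3\underline{x}=\frac{1}{|\mathcal{D}|}\oint_{\partial\mathcal{D}}\mathcal{S}_0\,n_i\,dA,
\end{equation}
and take $\mathcal{D}$ to be balls of radius $L$, so that the ratio $|\partial\mathcal{D}|/|\mathcal{D}|=3/L$ tends to zero: a source that is bounded (statistically homogeneous at large scale) makes the surface term $O(L^2)$ negligible against the volume $O(L^3)$, whence $\langle\vec{\nabla}\mathcal{S}_0\rangle=0$ and the claim follows.

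The main obstacle is precisely the control of this boundary term. One cannot simply invoke $\langle\vec{\nabla}B\rangle=\vec{\nabla}\langle B\rangle=0$, because the infinite-volume average does not commute with spatial differentiation; the discrepancy between the two sides is exactly the surface contribution that must be shown to vanish. Pushing the gradient onto $\mathcal{S}_0$ \emph{before} averaging is what makes this feasible, since the magnification effect that motivates the paper may let $B$ itself grow at large radius, whereas the physical assumption of large-scale homogeneity of the matter distribution \cite{homog} directly guarantees boundedness of the source and hence the decay of its boundary average. The only hypotheses I would make explicit are the existence of the whole-space average (already assumed for the previous Proposition) and this mild growth bound on $\mathcal{S}_0$.
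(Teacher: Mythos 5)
Your proof is correct, but it takes a genuinely different route from the paper's. The paper never differentiates the source: it keeps the gradient on the kernel, writing $\vec{\nabla}u(\tau;\underline{x})=\int\mathcal{S}_0(\underline{x}')\,\vec{\nabla}_{\underline{x}}f(\tau;|\underline{x}-\underline{x}'|)\,d^3\underline{x}'$, commutes the infinite-volume average with the $\underline{r}$-integral exactly as you do, and ends with $\int_{|\underline{r}|<R(\tau)}\frac{\underline{r}}{|\underline{r}|}f'(\tau;|\underline{r}|)\langle\mathcal{S}_0\rangle\,d^3\underline{r}=0$, which vanishes by a pure parity argument: $\hat{r}\,f'(\tau;r)$ is odd and the domain is a ball, so isotropy of the Green function does all the work. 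You instead push the gradient onto $\mathcal{S}_0$ and then kill $\langle\vec{\nabla}\mathcal{S}_0\rangle$ with the divergence theorem and the vanishing surface-to-volume ratio of balls. Each choice buys something. The paper's argument requires nothing of the source beyond the existence of its translation-invariant average $\langle\mathcal{S}_0\rangle$ --- no differentiability, no boundedness --- which is relevant because the paper explicitly contemplates singular (fractal) matter distributions as sources in its conclusions; it also does not privilege ball-shaped exhausting domains. Your argument never uses the radial symmetry of $f$, so it would survive for an anisotropic kernel; it yields a quantitative $O(1/L)$ rate (indeed the bound $3\sup|\mathcal{S}_0|/L$ is uniform in $\underline{r}$, which justifies the interchange of limit and integral more cleanly than the paper does); and it makes explicit the real subtlety, namely that $\vec{\nabla}$ and the infinite-volume average do not commute, the obstruction being exactly a boundary term --- a point the paper glosses over. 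Two small caveats on your version: your final display tacitly needs $\int_{|\underline{r}|<R(\tau)}f(\tau;|\underline{r}|)\,d^3\underline{r}$ to be finite (otherwise you face $\infty\cdot 0$), but this is the same integral appearing in (\ref{average formula}), whose finiteness is already assumed there; and your extra hypotheses (a $C^1$, bounded source and ball exhaustion) are mild and consistent with the large-scale homogeneity assumption, so they do not constitute a gap.
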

\begin{proof}
	Similarly to $u$, we find now the average of $\vec{\nabla}u$:

	$\langle \vec{\nabla}u\rangle(\tau):=\lim_{\mathcal{D}\nearrow\mathbb{R}^3}\frac{1}{|\mathcal{D}|}\int_{\mathcal{D}}\vec{\nabla}u(\tau;\underline{x})d^3\underline{x}=$

	$=\lim_{\mathcal{D}\nearrow\mathbb{R}^3}\frac{1}{|\mathcal{D}|}\int_{\mathcal{D}}d^3\underline{x}\int_{|\underline{x}'-\underline{x}|<R(\tau)}d^3\underline{x}'\mathcal{S}_0(\underline{x}')\vec{\nabla}_{\underline{x}}f(\tau;|\underline{x}-\underline{x}'|)=$
	
	$=\lim_{\mathcal{D}\nearrow\mathbb{R}^3}\frac{1}{|\mathcal{D}|}\int_{\mathcal{D}}d^3\underline{x}\int_{|\underline{r}|<R(\tau)}d^3\underline{r}\mathcal{S}_0(\underline{r}+\underline{x})\vec{\nabla}_{\underline{r}}f(\tau;|\underline{r}|)=$

	$=\int_{|\underline{r}|<R(\tau)}\left(\vec{\nabla}_{\underline{r}}f(\tau;|\underline{r}|)\right)\left[\lim_{\mathcal{D}\nearrow\mathbb{R}^3}\frac{1}{|\mathcal{D}|}\int_{\mathcal{D}}d^3\underline{x}\mathcal{S}_0(\underline{r}+\underline{x})\right]d^3\underline{r}=$

	$=\int_{|\underline{r}|<R(\tau)}\frac{\underline{r}}{|\underline{r}|}f'(\tau;|\underline{r}|)\langle\mathcal{S}_0\rangle d^3\underline{r}=0$,

	because it is an integral of an odd function over a symmetric region.In particular, it is true for $u=B$.
\end{proof}
The averaged metric we find is diagonal:
\begin{equation}
	\langle g_{\mu\nu}\rangle=\overline{g}_{\mu\nu}+\langle\tilde{g}_{\mu\nu}\rangle=a(\tau)^2(1+2\langle A\rangle(\tau))d\tau\otimes d\tau-a(\tau)^2(1-2\langle C\rangle(\tau))\delta_{ij}dx_idx_j,
\end{equation}
where
\begin{multline}
	\langle A\rangle(\tau)=16\pi^2G\langle\tilde{\rho}\rangle(\tau_0)\int_0^{R(\tau)}f_A(\tau; r)r^2dr, \\
	\langle C\rangle(\tau)=4\pi(2\dot{H}_0\langle A\rangle(\tau_0)+4\pi G\langle\tilde{\rho}\rangle(\tau_0))\int_0^{R(\tau)}f_C(\tau; r)r^2dr, \; s.t. \\
	f_{A,C}(\tau; |\underline{r}|):=\int_{\tau_I}^{\tau}G_{A,C}(\tau, \underline{r}; \tau', \underline{0})T(\tau')d\tau', \; T(\tau)=\frac{\tilde{\rho}(\tau,\underline{x})}{\tilde{\rho}(\tau_0,\underline{x})}, \\
	\left(\Box-2H\partial_{\tau}+2(\dot{H}-2H^2)\right)G_A(\tau, \underline{x}; \tau', \underline{x}')=\delta(\tau-\tau')\delta^{(3)}(\underline{x}-\underline{x}'), \\
	\left(\Box-2H\partial_{\tau}\right)G_C(\tau, \underline{x}; \tau', \underline{x}')=\delta(\tau-\tau')\delta^{(3)}(\underline{x}-\underline{x}').
\end{multline}
\begin{oss}
\label{decup}
	$A$ and $C$ are not decoupled, in general. The source $2\dot{H}A$ of $C$ has not separable variables, as we assumed. This is a limit for the averaging procedure, but in the following particular case we will consider the wave equations will decouple.
\end{oss}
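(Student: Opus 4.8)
The observation makes two assertions: that the wave equation for $C$ cannot be handled independently of $A$, and, more sharply, that the term $2\dot H A$ in its source violates the separation-of-variables hypothesis $\mathcal S(\tau;\underline x)=T(\tau)\mathcal S_0(\underline x)$ on which the averaging formula (\ref{average formula}), and the $\langle C\rangle$ expression written just above, silently rely. The first assertion is immediate, since $A$ enters the right-hand side of the $C$-equation explicitly; the real content is the second. The plan is to exhibit the retarded solution $A$ mode by mode and show that it is generically not of product form $g(\tau)\mathcal A_0(\underline x)$, whence $2\dot H(\tau)A(\tau,\underline x)$ cannot be separable either.

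First I would Fourier-decompose in space. Since $\Box=\partial_\tau^2-\nabla^2$ is flat, the modes evolve independently: writing $\tilde\rho$, hence $\mathcal S_0$, as $\int\hat{\mathcal S}_0(\underline k)e^{i\underline k\cdot\underline x}\,d^3\underline k$ with the common time profile $T(\tau)$ forced by our separable-$\tilde\rho$ assumption, the amplitude $A_{\underline k}(\tau)$ solves $\ddot A_{\underline k}-2H\dot A_{\underline k}+\bigl(k^2+2(\dot H-2H^2)\bigr)A_{\underline k}=T(\tau)\hat{\mathcal S}_0(\underline k)$, with $k^2=|\underline k|^2$ arising from $-\nabla^2$. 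The retarded (asymptotically vanishing) prescription fixes $A_{\underline k}$ uniquely, and the normalized profile $\phi_{\underline k}:=A_{\underline k}/\hat{\mathcal S}_0(\underline k)$ solves the same equation with unit source $T(\tau)$ but with a scale-dependent coefficient $k^2$. Reassembling, $A(\tau,\underline x)=\int A_{\underline k}(\tau)e^{i\underline k\cdot\underline x}\,d^3\underline k$ factors as $g(\tau)\mathcal A_0(\underline x)$ if and only if, by uniqueness of the Fourier transform, $\phi_{\underline k}(\tau)=g(\tau)$ for every active $\underline k$; since distinct values of $k^2$ give non-proportional solutions of the above ODE for a fixed source $T(\tau)$, this forces $\hat{\mathcal S}_0$ to be supported on a single sphere $|\underline k|=\text{const}$.

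I would then conclude as follows. For a generic inhomogeneity $\tilde\rho$, which mixes many spatial scales, $A$ does not separate, so $2\dot H(\tau)A(\tau,\underline x)$ does not separate, and neither does the full $C$-source $2\dot H A+4\pi G a^2\tilde\rho$, whose manifestly separable second term cannot absorb the $k$-dependence of the first. Hence the averaging formula (\ref{average formula}) cannot be applied term by term to the $C$-equation, and $\langle C\rangle$ cannot be obtained without first reconstructing the full field $A$ --- precisely the failure of decoupling asserted. The sole exception is the degenerate single-scale source isolated above, which is the particular case to be treated in \S4.

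I expect the main obstacle to be making ``decoupling'' precise enough to prove rather than merely state: one must identify separation of variables as the operative requirement, pin down the exact degeneracy (support on one $|\underline k|$) under which it is restored, and verify the genuinely non-proportional dependence of $\phi_{\underline k}$ on $k^2$ for a common source $T(\tau)$ --- a short linear-independence (Wronskian) argument that is transparent in Fourier space but obscured if one works directly with the real-space kernel $f_A(\tau;r)$, whose light-cone support already entangles $\tau$ and $r$.
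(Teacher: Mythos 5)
Your Fourier-mode rigidity argument is sound as far as it goes, and it in fact supplies a proof of something the paper merely asserts: subtracting the mode equations for two wavenumbers $k_1^2\neq k_2^2$ sharing a putative common profile $g(\tau)$ gives $(k_1^2-k_2^2)g\equiv0$, so the retarded $A$ is of product form only if $\hat{\mathcal{S}}_0$ is supported on a single sphere $|\underline{k}|=\mathrm{const}$. The genuine error is your final step, where you promote this to ``the sole exception'' and identify it with the particular case announced in Observation \ref{decup}. What must separate is not $A$ but the product $2\dot{H}(\tau)A(\tau;\underline{x})$, and that product is trivially separable --- identically zero --- whenever $\dot{H}\equiv0$, however badly $A$ itself fails to factor. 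This is precisely the paper's case: \S4 assumes a constant conformal Hubble parameter ($a(\tau)=e^{H_0\tau}$, dominance of the $w=-\frac{1}{3}$ component), so the coupling term drops out of the $C$-equation, which becomes $\Box C-2H_0\dot{C}=4\pi Ga^2\tilde{\rho}$ and decouples for an \emph{arbitrary} inhomogeneity $\tilde{\rho}_0(\underline{x})$, with no restriction on spectral support. Your single-scale condition is thus sufficient for the separability of $A$ but not necessary for the decoupling the observation refers to, and it points the reader to the wrong mechanism in \S4.

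Two further mismatches with the paper are worth noting. First, \S4.1 gives a second decoupling mechanism you would exclude: for any single dominating component $\overline{\Omega}_{w0}=1$ one has $\dot{H}\propto H^2$, and the auxiliary field $D$ defined by $\Box D-2H\dot{D}=8\pi G(1-\alpha)a^2\tilde{\rho}$ with $\alpha=\frac{2}{3w+1}$ obeys a closed wave equation with separable source, after which $C=(D-A)/(1-2\alpha)$; again no monochromaticity of the source is required, because the $A$-dependence is eliminated by a linear combination rather than by factorizing $A$. Second, a single-Laplacian-eigenvalue spatial profile does appear later, in \S4.2 ($\nabla^2X=\frac{p_4(n)}{p_2(n)}X$), but there it is a consequence of the growth ansatz $\delta_M=a(\tau)^nX(H_0\underline{x})$ for the density contrast, not the hypothesis restoring the decoupling --- so the resemblance to your degenerate case should not be read as confirming your conclusion. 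Your non-separability argument is worth keeping; the fix is to state it as: the $C$-source fails to separate unless either the source is monochromatic \emph{or} the coefficient $\dot{H}$ vanishes (or, more generally, a $D$-type combination exists), the latter being what \S4 actually uses.
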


\subsection{Comparison with a pure homogeneous model}

Let us consider now an observer inside the universe we are describing. He can observe his universe and try to explain his observations with general relativity. Following a friedmanian paradigm, he could approximate the components inside the universe and its metric as homogeneous. With the usual Robertson-Walker variables, he writes the metric as
\begin{equation}
\label{pert metric}
	\begin{matrix}
		\langle g_{\mu\nu}\rangle=dt\otimes dt-\bold{a}(t)\delta_{ij}dx_idx_j, \; s.t. \\
		\; dt:=a(\tau)\sqrt{1+2\langle A\rangle(\tau)}d\tau, \; \bold{a}:=a(\tau)\sqrt{1-2\langle C\rangle(\tau)}.
	\end{matrix}
\end{equation}
He has a perfect correspondence with the averaged perturbed metric, considering his time $t$ and expansion parameter $\bold{a}(t)$ as perturbed quantities. We can express the correspondence with suitable "perturbations":
\begin{equation}
	\begin{matrix}
		dt:=\tilde{t}d\overline{t} \; s.t. \; \tilde{t}:=\sqrt{1+2\langle A\rangle}, \\
		\bold{a}:=a\tilde{a} \; s.t. \; \tilde{a}:=\sqrt{1-2\langle C\rangle}.
	\end{matrix}
\end{equation}
\begin{oss}
\label{cut off}
	We can have singularities if $\tilde{t}$ or $\tilde{a}$ reach zero. It could happen for times ancient enough, and our perturbation theory is no more valid for previous instants, since imaginary quantities are not allowed.

	If $\langle C\rangle(t_{BB})=\frac{1}{2}$, then $\bold{a}(t_{BB})=0$ even if $a(t_{BB})\neq0$. $t_{BB}$ would be a perturbed Big Bang, and we can set $t_{BB}=0$ w.l.o.g.

	If $\langle A\rangle(t_{min})=-\frac{1}{2}$, then the perturbation theory loses validity even if there is no Big Bang. In this case, too much early epochs remain simply not describable by the model. This provides a cut off for the time integration.
\end{oss}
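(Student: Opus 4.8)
The plan is to read the whole Observation off the two defining relations of (\ref{pert metric}), namely $\tilde{t}=\sqrt{1+2\langle A\rangle}$ and $\tilde{a}=\sqrt{1-2\langle C\rangle}$, together with $\mathbf{a}=a\tilde{a}$ and $dt=a\tilde{t}\,d\tau$. Every assertion here is a claim about where one of these two square roots degenerates, so first I would locate both the vanishing locus and the negativity locus of each radicand $1+2\langle A\rangle$ and $1-2\langle C\rangle$, and only afterwards attach the physical interpretation.

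For the spatial branch I would substitute $\langle C\rangle(t_{BB})=\frac{1}{2}$ into $\tilde{a}=\sqrt{1-2\langle C\rangle}$, obtaining $\tilde{a}(t_{BB})=0$ and hence $\mathbf{a}(t_{BB})=a(t_{BB})\,\tilde{a}(t_{BB})=0$ irrespective of the value of the background factor $a$. Since $\mathbf{a}$ is by construction the scale factor of the averaged physical metric (\ref{pert metric}), its vanishing is the collapse of the spatial sections, i.e. a Big Bang of the true model; and because this can occur at an instant where $a\neq0$, such a perturbed Big Bang is genuinely distinct from the background singularity at $\tau_I$. The time-translation freedom of the construction then lets me relabel this instant as $t_{BB}=0$.

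For the temporal branch I would impose reality and the correct Lorentzian signature of the line element $dt=a\tilde{t}\,d\tau$, which forces $1+2\langle A\rangle\ge0$. At $\langle A\rangle(t_{min})=-\frac{1}{2}$ the factor $\tilde{t}$ reaches zero, and for $\langle A\rangle<-\frac{1}{2}$ it turns imaginary, so (\ref{pert metric}) ceases to be a real metric of the right signature. I would stress that this is a breakdown of the first-order expansion rather than a curvature singularity of a well-defined geometry, which is precisely why it must be excised by hand: the time integration is cut off from below at $t_{min}$.

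The computational content is only substitution, so the hard part will be the interpretation, not the algebra. I expect the delicate point to be arguing why $\mathbf{a}=0$ deserves to be called a physical perturbed Big Bang while $\tilde{t}=0$ should instead be read as the edge of validity of the perturbation theory, i.e. distinguishing a true degeneration of the averaged metric from an artifact of leaving the perturbative regime. A secondary technical point I would want to pin down is that these are the first degeneracies encountered as $\tau$ decreases, and that $\langle A\rangle$ and $\langle C\rangle$ are monotone enough near the initial epoch for $t_{BB}$ and $t_{min}$ to be uniquely defined, so that the cut off is unambiguous.
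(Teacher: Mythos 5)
Your proposal is correct and matches the paper's (implicit) justification: the Observation is a remark-style statement whose entire content follows by direct substitution into the definitions $\tilde{t}=\sqrt{1+2\langle A\rangle}$, $\tilde{a}=\sqrt{1-2\langle C\rangle}$ and $\mathbf{a}=a\tilde{a}$ from (\ref{pert metric}), exactly as you do, with the same reading of $\mathbf{a}=0$ as a perturbed Big Bang and of the imaginary square root as loss of validity of the perturbative expansion. Your extra remarks on monotonicity and uniqueness of $t_{BB}$, $t_{min}$ go beyond what the paper establishes but do not conflict with it.
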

The equation of expansion that the observer tries to use is the usual friedmanian one (\ref{fried}), interpreting the true expansion $\textbf{a}(t)$ as the result of some effective, homogeneous components $\{\Omega_{w0}\}$. These are not, in general, the unperturbed mean quantities $\{\overline{\Omega}_{w0}\}$. Moreover, the true function $\textbf{a}(t)$ doesn't follow a ODE as (\ref{fried}), in general, so the expansion parameter our observer obtains is not the same of $\textbf{a}(t)$. We can call $a_D(t)$ the parameter of the observer's model, with $H_D$ Hubble parameter, so the observer's ODE is
\begin{equation}
\label{dark fried}
	\left(\frac{\dot{a}_D}{H_{D0}}\right)^2=\sum_w\Omega_{w0}a_D^{-3w-1}.
\end{equation}
\begin{oss}
	Here we call $H_{D0}:=H_D(t_0)$, where $t_0$ is such that $\textbf{a}(t_0)=1$, so in general $t_0\neq t(\tau_0)$ because $\tilde{a}(t_0)\neq1$.
\end{oss}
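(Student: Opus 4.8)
The plan is to compare the two normalizations that fix the present epoch — $a(\tau_0)=1$ in the background and $\mathbf{a}(t_0)=1$ in the true model — and to show that the common value $1$ of the scale factor is reached at different times once the perturbation factor $\tilde{a}$ is switched on. The whole argument rests on the composite relation between the true and background scale factors. Combining $\mathbf{a}:=a\tilde{a}$ from (\ref{pert metric}) with the reparametrization $dt:=a\sqrt{1+2\langle A\rangle}\,d\tau$, which defines $t(\tau)$ as a strictly increasing bijection because its integrand is strictly positive, one obtains the pointwise identity $\mathbf{a}(t(\tau))=a(\tau)\,\tilde{a}(\tau)$ valid for every admissible $\tau$.

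First I would evaluate this identity at the background present $\tau=\tau_0$. Using $a(\tau_0)=1$ it collapses to $\mathbf{a}(t(\tau_0))=\tilde{a}(\tau_0)=\sqrt{1-2\langle C\rangle(\tau_0)}$, whereas by construction $\mathbf{a}(t_0)=1$. Thus $t(\tau_0)$ and $t_0$ carry the same value of $\mathbf{a}$ precisely when $\tilde{a}(\tau_0)=1$, i.e. when the averaged spatial perturbation vanishes at the present, $\langle C\rangle(\tau_0)=0$. Equivalently, reading the same identity at $t=t_0$ one gets $a(\tau(t_0))=1/\tilde{a}(t_0)$, which differs from $a(\tau_0)=1$ exactly when $\tilde{a}(t_0)\neq1$, reproducing the formulation in the statement.

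Second I would dispose of the coincidental case, which is the meaning of the qualifier ``in general''. By the averaging Proposition (\ref{average formula}), $\langle C\rangle(\tau_0)$ is the retarded integral over the observable past of the Green function $f_C$ against the averaged matter source; for a genuinely inhomogeneous dust with $\langle\tilde{\rho}\rangle\neq0$ there is no reason for this accumulated potential to cancel, so $\tilde{a}(\tau_0)\neq1$ and hence $\mathbf{a}(t(\tau_0))\neq\mathbf{a}(t_0)$. Only on the nongeneric locus $\langle C\rangle(\tau_0)=0$ would the two presents coincide, and it is precisely this locus that the word ``general'' excludes.

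Finally I would close by invertibility. Since the background is expanding, $\dot{a}=aH>0$, so $a$ is strictly monotone and injective on $(\tau_I,\tau_F)$, and the same holds for $\mathbf{a}$ near the present for a weak perturbation; two distinct scale-factor values can then be attained only at distinct times, giving $t(\tau_0)\neq t_0$. The one delicate ingredient is exactly this monotonicity: one must know that $\mathbf{a}$ does not turn around between $t(\tau_0)$ and $t_0$, but $\dot{\mathbf{a}}>0$ near $\tau_0$ for the perturbed expanding background guarantees local injectivity, and the argument concludes.
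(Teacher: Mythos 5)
Your proposal is correct and takes essentially the same route as the paper, which offers no separate proof: the observation is a definitional remark whose stated reason you have simply expanded, via $\mathbf{a}(t(\tau_0))=a(\tau_0)\tilde{a}(\tau_0)=\tilde{a}(\tau_0)=\sqrt{1-2\langle C\rangle(\tau_0)}\neq1=\mathbf{a}(t_0)$ generically, hence $t(\tau_0)\neq t_0$. One small remark: your closing monotonicity step is superfluous for the stated claim, since $\mathbf{a}$ is single-valued, so $\mathbf{a}(t(\tau_0))\neq\mathbf{a}(t_0)$ already forces $t(\tau_0)\neq t_0$ without any injectivity argument (injectivity is only needed if you also want the converse, that the two presents coincide exactly on the locus $\langle C\rangle(\tau_0)=0$).
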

Trying to evaluate the $\{\Omega_{w0}\}$, the observer can evaluate the quantity of matter he can see, or that of which he can deduce the existence\footnote{excluding gravitational phenomena}:

$\Omega_{BM0}:=\frac{8\pi G}{3H_{D0}^2}\rho_{BM0}$.

In the actual cosmology, $\rho_{BM0}$ is usually called baryonic matter. In general, it is a fraction of the true total matter\footnote{Here we call $\overline{\rho}_{M0}:=\overline{\rho}_{w0}|_{w=0}$ the matter component, for which $w=0$, as after $\Omega_{M0}:=\Omega_{w0}|_{w=0}$.}:

$\rho_{BM0}\leq\overline{\rho}_{M0}+\tilde{\rho}$.

What remains can be called "true dark matter", which actually can exist but is beyond the scientific and technological capabilities of the instruments available to the observer:

$\rho_{TDM0}:=\overline{\rho}_{M0}+\tilde{\rho}-\rho_{BM0}$.

Fitting $a_D$ to the expansion he actually observes, he will need in general an effective matter which is not the matter he sees, nor the total matter really exists. He will call "dark matter" all the lack w.r.t. his evaluating

$\Omega_{DM0}:=\Omega_{M0}-\Omega_{BM0}$,

but not all of this is actually existing matter. There is an amount of "fictitious matter"

$\Omega_{FM0}:=\Omega_{DM0}-\Omega_{TDM0}=\Omega_{M0}-\frac{8\pi G}{3H_{D0}^2}(\overline{\rho}_{M0}+\tilde{\rho})$

which is the quantity we want to estimate now. This $\Omega_{FM0}$ evaluates the difference between the amount of dark matter predicted by usual Cosmological Model, $\Omega_{DM0}$, and the real quantity of dark matter in the universe, $\Omega_{TDM0}$, which is less.

How can the observer estimate $\{\Omega_{w0}\}$? He can measure the expansion rate and its acceleration at the present instant $\dot{\bold{a}}(t_0)$, $\ddot{\bold{a}}(t_0)$, set them equal to his model parameters $\dot{a}_D(t_0)$, $\ddot{a}_D(t_0)$ and obtain a condition from the derivative of (\ref{dark fried}).
\begin{equation}
\label{dark sys}
	\begin{cases}
		a_D(t_0):=1=\bold{a}(t_0) \\
		\dot{a}_D(t_0)=H_{D0}:=\dot{\bold{a}}(t_0) \\
		\ddot{a}_D(t_0):=\ddot{\bold{a}}(t_0)
	\end{cases}.
\end{equation}
This was the procedure followed in \cite{perlmutter}, \cite{riess}, where it is obtained the quantity of matter and cosmological constant. The observer can fit the same parameters. As $\Omega_{M0}$ can contain some amount of fictitious matter $\Omega_{FM0}$, it is possible that also $\Omega_{\Lambda0}:=\Omega_{w0}|_{w=-1}$ would have a fictitious component

$\Omega_{F\Lambda0}:=\Omega_{\Lambda0}-\frac{8\pi G}{3H_{D0}^2}\overline{\rho}_{\Lambda0}$.

We can assume the observer can measure the quantity of any other component, setting

$\Omega_{w0}=\frac{8\pi G}{3H_{D0}^2}\overline{\rho}_{w0}, \forall w\neq0 \; , \; 1$.

In (\ref{dark sys}) the first two equations fix the parameters $t_0, H_0$, which are invisible for the observer, who can measure only the true Hubble parameter $\textbf{H}_0=H_{D0}$. The only unknown parameters remain $\Omega_{DM0}, \Omega_{FM0}, \Omega_{F\Lambda0}$. The third equation provides the algebraic relation
\begin{equation}
\label{alg rel}
	\sum_w(3w+1)\Omega_{w0}=-\frac{2\ddot{\textbf{a}}(t_0)}{H_{D0}^2}.
\end{equation}
Another sure relation is that the sum of all $\Omega_{w0}$ is $1$. So, only one unknown parameter remains, maybe $\Omega_{TDM0}$, which measures the technological observation skill. If the observer manages to evaluate all the actually existing matter in his universe, also $\Omega_{TDM0}$ is fixed at zero and we can calculate all the parameters.
\begin{oss}
\label{magn}
	The magnification effect we talked about in the introduction is verified if $\Omega_{M0}>\frac{8\pi G}{3H_{D0}^2}(\overline{\rho}_{M0}+\tilde{\rho})$. In this case, the quantity of dark matter can be reduced by a quantity $\Omega_{FM0}>0$. If the other case, it means that the reduction effect results to be stronger.
\end{oss}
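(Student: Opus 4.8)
The plan is to reduce the stated inequality to the sign of $\Omega_{FM0}$ and then to trace that sign back to the averaged retarded potentials. By the definition introduced just above, $\Omega_{FM0}=\Omega_{M0}-\frac{8\pi G}{3H_{D0}^2}(\overline{\rho}_{M0}+\tilde{\rho})$, so the condition $\Omega_{M0}>\frac{8\pi G}{3H_{D0}^2}(\overline{\rho}_{M0}+\tilde{\rho})$ is literally equivalent to $\Omega_{FM0}>0$. The genuine content is therefore to identify this sign with the magnification phenomenon of the introduction, and to exhibit the opposite sign as the regime where the fall-off of the potential with distance prevails. Thus I would spend the proof computing $\Omega_{M0}$ from the observer's fitting procedure and comparing it with the true matter term.

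First I would make the observer's inference explicit. Using the identification (\ref{pert metric}), the observed scale factor is $\textbf{a}=a(\tau)\sqrt{1-2\langle C\rangle(\tau)}$ with proper time $dt=a(\tau)\sqrt{1+2\langle A\rangle(\tau)}\,d\tau$. Differentiating along $t$, I would express $H_{D0}=\dot{\textbf{a}}(t_0)$ and $\ddot{\textbf{a}}(t_0)$ through $a$, $H$, $\langle A\rangle$, $\langle C\rangle$ and their conformal-time derivatives, evaluated at the instant fixed by $\textbf{a}(t_0)=1$ (which differs from $\tau_0$, since $\sqrt{1-2\langle C\rangle}\neq1$ there). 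To first order in the perturbations these quantities split into their background Friedmann values plus linear corrections in $\langle A\rangle$ and $\langle C\rangle$.

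I would then substitute into the observer's algebraic relation (\ref{alg rel}), which issues from the fitting conditions (\ref{dark sys}) applied to (\ref{dark fried}). Together with the sum rule $\sum_w\Omega_{w0}=1$ and the hypothesis that every component with $w\neq0,-1$ is measured correctly, this pair of equations determines the matter and cosmological-constant fractions; solving for the matter one yields $\Omega_{M0}$ as its unperturbed value $\frac{8\pi G}{3H_{D0}^2}(\overline{\rho}_{M0}+\tilde{\rho})$ plus a correction that is a linear functional of the averaged potentials and their present derivatives. Subtracting the true matter term, the background parts cancel by the Friedmann equation (\ref{fried}), leaving $\Omega_{FM0}$ expressed purely through $\langle A\rangle$, $\langle C\rangle$ at the present epoch.

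Finally, the sign of this functional settles the dichotomy. By Proposition \ref{average formula} the averages $\langle A\rangle$ and $\langle C\rangle$ are integrals of the superposed retarded kernels $f_{A,C}$ against the past density profile $T(\tau)=\tilde{\rho}(\tau)/\tilde{\rho}(\tau_0)$ back to $\tau_I$; the magnification is exactly the statement that the contribution of the denser early universe renders this functional positive, $\Omega_{FM0}>0$, whereas if the spatial decay of the kernels dominates one obtains $\Omega_{FM0}<0$ and the reduction wins. I expect the main obstacle to be twofold. First, the coupling flagged in Observation \ref{decup}: the source $2\dot{H}A$ of the equation for $C$ is not of separated-variable form, so the clean averaging of Proposition \ref{average formula} applies only in the decoupled special case postponed to \S4. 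Second, actually fixing the sign requires controlling the competition between the growth of $T(\tau)$ near the Big Bang and the decay of $G_{A,C}$ in $r$, i.e. showing that the retarded integrals dominate in the magnifying direction --- the quantitative heart of the matter, deferred to the explicit model.
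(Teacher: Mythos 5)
Your proposal is correct and matches the paper's treatment: the observation carries no proof in the paper precisely because, as you note in your first paragraph, the stated inequality is definitionally equivalent to $\Omega_{FM0}>0$ via $\Omega_{FM0}:=\Omega_{M0}-\frac{8\pi G}{3H_{D0}^2}(\overline{\rho}_{M0}+\tilde{\rho})$, the sign being then read as magnification versus reduction. The quantitative program you sketch and correctly defer (tracing the sign through the fitting conditions and the retarded kernels, modulo the coupling of $A$ and $C$) is exactly what the paper carries out later in \S 4--\S 5, where the explicit constant-coefficient model yields $\Omega_{FM0}/\Omega_{IM0}\cong 1.02>0$ and the magnification effect is thereby verified.
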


\section{Constant coefficients case: solution, matter source and effective density}

To obtain some explicit solution of (\ref{PDE}), at least for a simple case, from now on we consider a universe with constant Hubble parameter, so that
\begin{equation}
	\mathcal{H}\equiv\mathcal{H}_0:=-2H_0; \; \mathcal{K}\equiv\mathcal{K}_0:=-4H_0^2 \; for \; A \; and \; B, \; \mathcal{K}\equiv0 \; for \; C.
\end{equation}
In this case the system is completely decupled, so the problem in Observation \ref{decup} is overcome. Physically, we have a universe dominated by only one component $\Omega_{w0}=1$, which is a suitable form of dark energy\footnote{Any component with $w<0$ is a dark energy, while the cosmological constant is a particular form of dark energy with $w=-1$.} such that $w=-\frac{1}{3}$. It is the same expansion assumed in \cite{sergio}, since the background expansion law is $a(\tau)=e^{H_0\tau}$ and $a(\overline{t})=H_0\overline{t}$, but now it is treated in a general relativistic context.

Through Fourier transformation, we obtain the Green function for (\ref{PDE}) with constant coefficients:
\begin{equation}
\label{Green}
	G(\tau;\underline{x})=\frac{e^{\frac{1}{2}\mathcal{H}_0\tau}}{4\pi}\left[-\frac{\delta(\tau-|\underline{x}|)}{|\underline{x}|}+\sqrt{\frac{\overline{\mathcal{K}}}{\tau^2-|\underline{x}|^2}}J_0'\left(\sqrt{\overline{\mathcal{K}}(\tau^2-|\underline{x}|^2)}\right)\theta(\tau-|\underline{x}|)\right],
\end{equation}
where we defined the \emph{discriminant} $\overline{\mathcal{K}}:=-\mathcal{K}_0-\left(\frac{\mathcal{H}_0}{2}\right)^2$, $J_0'$ is the first derivative of the zeroth order Bessel function, and $\theta$ is the Heaviside function. Notice that for $A$ and $B$ we have $\overline{\mathcal{K}}=3H_0^2$, whence it comes a factor $\sqrt{3}$, while for $C$ it is $\overline{\mathcal{K}}=-H_0^2$, for which $J_0$ is replaced by $I_0$, the zeroth order modified Bessel function. We easily recognize the causality in the potential, since the first term propagates at the speed of light and the second one slower. The second term is some kind of "echo", due to the difference of the differential operator from a pure d'alembertian. In this case the PDE is homogeneous in time, so $G(\tau,\underline{x};\tau',\underline{0})=G(\tau-\tau',\underline{x};0,\underline{0})=G(\tau-\tau';\underline{x})$.

\subsection{Is the constant coefficients case representative for the real universe dynamics?}

The answer to the question could not be clear, since a constant expansion is pretty different from the our real universe's one. From the next subsection onwards, we will calculate the fictitious matter in a constant coefficients universe, but even if we find some, there could be doubts about the presence of the same effect in a universe with not constant expansion. The general solution for the wave equations is quite difficult to get and needs numerical integration, but here we now show that it would lead to the same effect. This is because the Green functions have the same shape in any case, inducing similar averaged metric and similar distortion on $\textbf{a}(t)$.

The general PDE (\ref{PDE}) is isotropic, which allows us to reduce the dimension of the problem:
\begin{align}
	G(\tau,\underline{x};\tau',\underline{0})&=-\frac{a(\tau')}{2\pi|\underline{x}|a(\tau)}\partial_r\Phi(\tau,|\underline{x}|;\tau') \; s.t. \cr
	\Phi''-\ddot{\Phi}-\overline{\mathcal{K}}(\tau)\Phi&=\delta(r)\delta(\tau-\tau'); \; \overline{\mathcal{K}}:=\frac{1}{2}\dot{\mathcal{H}}-\frac{1}{4}\mathcal{H}^2-\mathcal{K}.
\end{align}
For example, with constant coefficients $\overline{\mathcal{K}}$ is constant and we derive (\ref{Green}) from

$\Phi(\tau,r;0)=\frac{1}{2}\theta(\tau-r)J_0\left(\sqrt{\overline{\mathcal{K}}(\tau^2-r^2)}\right)$.

As we saw in Observation \ref{decup}, the averaging procedure is not applicable as long as $A, C$ are coupled. It is possible to decouple their PDEs whenever the universe is dominated by a single component $\overline{\Omega}_{w0}=1$. Indeed, we can define an auxiliary field $D$ with equation
\begin{equation}
	\Box D-2H\dot{D}=8\pi G(1-\alpha)a^2\tilde{\rho} \quad s.t. \quad \alpha:=\frac{2}{3w+1}
\end{equation}
and obtain $C$ and its average as
\begin{equation}
	C=\frac{D-A}{1-2\alpha}.
\end{equation}
For a general background, with more components, it is possible to approximate the development of $A, C$ neglecting at each instant all the components except the biggest one. The obtained law is a gluing of more single-component developments.

With a simple numerical integration, we obtained the one-dimensional retarded potentials $\Phi(\tau,r;\tau')$ for some single-component cases. Almost all of them have the same shape, as we see from Figure 1,2,3\footnote{The thickness of graphics denotes the uncertainty of numerical algorithm.}, which allow us to believe that in general case a similar apparent matter will arise. Probably, the numerical value of $\Omega_{FM0}$ is different for our real universe, but we can expect qualitatively the same effect.

The only difference is for the cosmological constant's dominance; in this case, the gravitational wave equation has a pure d'alembertian, so we have no the "echo" term. However, even if in our universe there is a cosmological constant, it is not dominant until very recent times.
\begin{figure}[ht]
	\begin{minipage}[b]{0.5\linewidth}
		\centering
		\includegraphics[width=\textwidth]{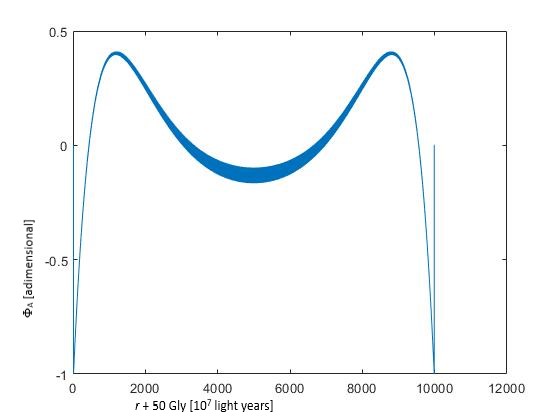}
		\caption{$\Phi_A(\tau_0,r;\tau_0-50 Gy)$ for the dominance of $w=-\frac{1}{3}$ dark energy-kind}
		\label{fig:figure1}
	\end{minipage}
	\hspace{0.2cm}
	\begin{minipage}[b]{0.5\linewidth}
		\centering
		\includegraphics[width=\textwidth]{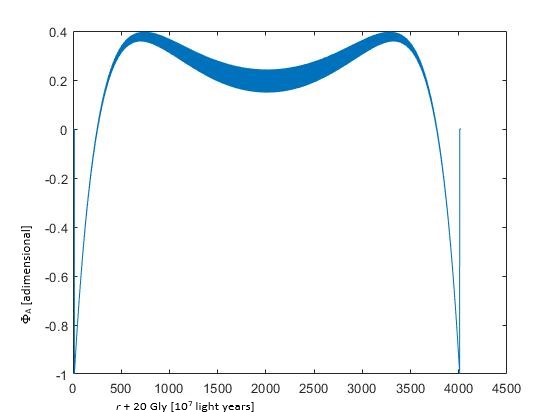}
		\caption{$\Phi_A(\tau_0,r;\tau_0-20 Gy)$ for the dominance of matter}
		\label{fig:figure2}
	\end{minipage}
\end{figure}

\begin{figure}[ht]
	\begin{minipage}[b]{0.5\linewidth}
		\centering
		\includegraphics[width=\textwidth]{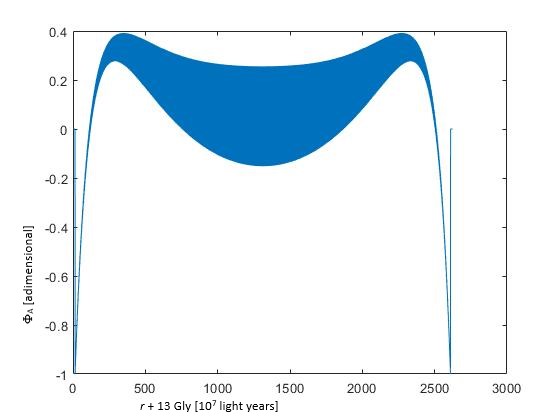}
		\caption{$\Phi_A(\tau_0,r;\tau_0-13 Gy)$ for the dominance of radiation}
		\label{fig:figure1}
	\end{minipage}
	\hspace{0.2cm}
	\begin{minipage}[b]{0.5\linewidth}
		
	\end{minipage}
\end{figure}
\newpage

\subsection{Density contrast growing rate}

To obtain an explicit expression for $\langle A,C\rangle$, what we need now  is an evaluation of their source $\mathcal{S}(\tau;\underline{x})=T(\tau)\mathcal{S}_0(\underline{x})$. Since $\mathcal{S}_0\propto\tilde{\rho}_0$, it is enough to compute the growing law $T(\tau)$ of the matter inhomogeneities, due to the progressive attraction of more and more material from the medium $\overline{\rho}$. Let us measure it with the density contrast of matter
\begin{equation}
	\delta_M:=\frac{\tilde{\rho}}{\overline{\rho}_M}\propto\tilde{\rho}a^3.
\end{equation}
We can obtain a condition on it from the wave equation for $B$ and the energy-momentum conservation, which in the case $w=-\frac{1}{3}$ become
\begin{equation}
	\begin{cases}
		\Box B-2H_0\dot{B}-4H_0^2B=16\pi Ga^2q \\
		\dot{\tilde{\rho}}+3H_0\tilde{\rho}+\nabla^2q-2\overline{\rho}\dot{C}=0 \\
		\dot{q}+4H_0q+\frac{2}{3}\overline{\rho}C=0.
	\end{cases}
\end{equation}
With suitable substitutions,
\begin{equation}
\label{dens}
	\begin{matrix}
		H_0^2p_2(H_0^{-1}\partial_{\tau})\nabla^2\delta_M=H_0^4p_4(H_0^{-1}\partial_{\tau})\delta_M, \; s.t. \; the \; polynomials \; are \\
		\; p_2(x):=x^2+x-1, \; p_4(x):=(x-1)(x+1)(x^2+x+3).
	\end{matrix}
\end{equation}
Since the source has separable variables, we can separate them also for $\delta_M$. We assume now a growing law for the density contrast such that
\begin{equation}
	\delta_M(\tau;\underline{x})=a(\tau)^nX(H_0\underline{x})
\end{equation}
where the parameter $n$ is a "growing rate". From (\ref{dens}), we know
\begin{equation}
	\nabla^2X(\underline{\xi})=\frac{p_4(n)}{p_2(n)}X(\underline{\xi}).
\end{equation}
This means that $X$ is an eigenfunction of the laplacian, so it must have negative eigenvalues. Substituting inside $p_4$ and $p_2$, this condition on $n$ becomes
\begin{equation}
\label{cond n}
	n\in(-\Phi; -1)\sqcup(\varphi; 1)
\end{equation}
where we call $\varphi=\Phi^{-1}:=\frac{\sqrt{5}-1}{2}\cong0.618\dots$ the golden ratio. $\delta_M$ must grow, for gravity, so the physically acceptable values for $n$ are $(\varphi; 1)$. We will use
\begin{equation}
	\tilde{\rho}(\tau;\underline{x})=a(\tau)^{n-3}\tilde{\rho}_0(\underline{x}) \; \Rightarrow \; T(\tau)=a(\tau)^{n-1}.
\end{equation}

\subsection{A formula for the effective density}

Now we can evaluate the average metric with formula (\ref{average formula}), remembering $R(\tau)\equiv+\infty$ in our case. For $A$:
\begin{center}
	$f(\tau; r)=Ga(\tau)^{n-1}\left[-\frac{e^{-nH_0r}}{r}+\sqrt{3}H_0\int_0^{\infty}e^{-nH_0(\sigma+r)}\frac{J_0'(\sqrt{3}H_0\sqrt{\sigma(\sigma+2r)})}{\sqrt{\sigma(\sigma+2r)}}d\sigma\right]$,
\end{center}
\begin{equation}
	\langle A\rangle(\tau)=4\pi\left(\frac{1}{3}\mathcal{N}(n)-\frac{1}{n^2}\right)\frac{G\langle\tilde{\rho}_0\rangle}{H_0^2}a(\tau)^{n-1}.
\end{equation}
Similarly, for $C$:
\begin{equation}
	\langle C\rangle(\tau)=4\pi\left(\mathcal{M}(n)-\frac{1}{n^2}\right)\frac{G\langle\tilde{\rho}_0\rangle}{H_0^2}a(\tau)^{n-1}.
\end{equation}
Here we defined the integrals
\begin{equation}
	\begin{matrix}
		\mathcal{N}(n):=\int_0^{\infty}\int_0^{\infty}e^{-n(x+y)}\frac{J_0'(\sqrt{y(y+2x)})}{\sqrt{y(y+2x)}}x^2dydx, \\
	\mathcal{M}(n):=\int_0^{\infty}\int_0^{\infty}e^{-n(x+y)}\frac{I_0'(\sqrt{y(y+2x)})}{\sqrt{y(y+2x)}}x^2dydx.
	\end{matrix}
\end{equation}
Since $I_0$ grows exponentially, $\mathcal{M}$ is divergent for $n\leq1$. For $n>1$ they are
\begin{equation}
	\mathcal{N}(n)=\frac{1}{n^2+1}, \; \mathcal{M}(n)=\frac{1}{n^2-1}.
\end{equation}
The divergence could be interpreted as an infinite quantity of apparent dark matter, in the constant coefficient case, due to the expansion law near the first instant $\tau_I=-\infty$. However, any known physical theory fails near the Big Bang, so we should put a cut-off on integrals $\mathcal{N},\mathcal{M}$ that makes them finite.

For example, a natural cut off (Observation \ref{cut off}) for our theory is $\tau_{min}$ such that
\begin{equation}
\label{min}
	a(\tau_{min})^{1-n}=\left(\frac{1}{n^2}-\frac{1}{n^2+1}\right)\frac{8\pi G}{H_0^2}\langle\tilde{\rho}_0\rangle.
\end{equation}
As we will see (Observation \ref{pole vanish}), the pole in $n=1$ will be canceled by a factor $(n-1)$, so we can extend the $\mathcal{N}, \mathcal{M}$ functions also for $n<1$, which are the physical values. What we perform is essentially  a renormalization via analytic continuation.

In the constant coefficients case, the relation (\ref{alg rel}) can be expressed with an "effective density"
\begin{equation}
	\Omega_{eff}:=\Omega_{M0}-2\Omega_{\Lambda 0}=-2\frac{\ddot{a}_D(t_0)}{H_{D0}^2}.
\end{equation}
\begin{oss}
	The effective density results to be proportional to the deceleration parameter: $\Omega_{eff}=2q_0$. This means it may be negative. In general, $\Omega_{eff}\in[-2; 1]$.

	Pay attention: this is not a violation of the weak energy condition. The true matter-energy density is $\overline{\rho}+\tilde{\rho}$, which is always positive. This $\rho_{eff}:=\frac{3H_{D0}^2}{8\pi G}\Omega_{eff}$ is only a fictitious density, without physical existence.
\end{oss}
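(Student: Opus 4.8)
The plan is to split the Observation into the identity $\Omega_{eff}=2q_0$, which is essentially definitional, and the bound $\Omega_{eff}\in[-2,1]$, which will follow from the flatness and positivity constraints on the observer's fitted densities.

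For the identity, I would start from the standard deceleration parameter attached to the true expansion factor, $q_0:=-\ddot{\textbf{a}}(t_0)\textbf{a}(t_0)/\dot{\textbf{a}}(t_0)^2$. The normalizations in the system (\ref{dark sys}), $\textbf{a}(t_0)=1$ and $\dot{\textbf{a}}(t_0)=H_{D0}$, collapse this to $q_0=-\ddot{\textbf{a}}(t_0)/H_{D0}^2$. Since the same system imposes $\ddot{a}_D(t_0):=\ddot{\textbf{a}}(t_0)$, the definition $\Omega_{eff}:=-2\ddot{a}_D(t_0)/H_{D0}^2$ gives $\Omega_{eff}=2q_0$ at once. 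To identify this effective density with $\Omega_{M0}-2\Omega_{\Lambda0}$, I would read off the algebraic relation (\ref{alg rel}): the weight $(3w+1)$ takes the value $1$ for matter ($w=0$) and $-2$ for the cosmological constant ($w=-1$), so the two components the observer retains contribute precisely $\Omega_{M0}-2\Omega_{\Lambda0}$, matching $-2\ddot{\textbf{a}}(t_0)/H_{D0}^2$. Incidentally the background component $w=-\tfrac{1}{3}$ carries weight $(3w+1)=0$, so it is invisible to this relation, which is why the effective density is built only from the fitted matter and $\Lambda$.

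For the range, I would invoke the other sure relation quoted just after (\ref{alg rel}), that the density parameters sum to one; restricted to the observer's two components this is $\Omega_{M0}+\Omega_{\Lambda0}=1$. Eliminating $\Omega_{M0}$ yields $\Omega_{eff}=1-3\Omega_{\Lambda0}$, and since the observer treats both densities as non-negative, $\Omega_{\Lambda0}$ ranges over $[0,1]$; hence $\Omega_{eff}$ sweeps $[-2,1]$, with $-2$ reached for a pure cosmological constant and $1$ for pure matter. Negativity for $\Omega_{\Lambda0}>\tfrac{1}{3}$ is just the familiar statement that $q_0<0$ in an accelerating universe, so no separate verification is needed.

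Finally, the remark that a negative $\Omega_{eff}$ does not break the weak energy condition is interpretive rather than computational: $\rho_{eff}:=\frac{3H_{D0}^2}{8\pi G}\Omega_{eff}$ is a bookkeeping density for the observer's homogeneous fit, while the genuine source in the Einstein equations is $\overline{\rho}+\tilde{\rho}$, which the construction keeps positive throughout. I would simply note that $\rho_{eff}$ absorbs the fictitious contribution generated by averaging the retarded potentials, so its sign has no energy-condition meaning. The only delicate point is the range itself: it relies on fitting with exactly matter and $\Lambda$ and on imposing both flatness and non-negativity. Allowing a further component such as radiation, whose weight is $(3w+1)=2$, would raise the upper bound past $1$, so the interval $[-2,1]$ should be understood as tied to the matter-plus-$\Lambda$ fit rather than as a universal bound---this is the step I would be most careful to state precisely.
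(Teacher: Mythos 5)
Your treatment of the identity and of the weak-energy-condition remark is correct and matches what the paper intends: the Observation sits directly after the definition $\Omega_{eff}:=\Omega_{M0}-2\Omega_{\Lambda 0}=-2\ddot{a}_D(t_0)/H_{D0}^2$, and with the normalizations $a_D(t_0)=1$, $\dot{a}_D(t_0)=H_{D0}$ from (\ref{dark sys}) this is $2q_0$ purely by definition of the deceleration parameter; the paper gives no further argument, so this part is definitional bookkeeping, exactly as you present it. Your identification of $\Omega_{M0}-2\Omega_{\Lambda0}$ through the weights $3w+1$ in (\ref{alg rel}), with the $w=-\frac{1}{3}$ component dropping out, is also right.

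The step that would fail is your derivation of the interval. You restrict the sum rule to two components, $\Omega_{M0}+\Omega_{\Lambda0}=1$, and deduce $\Omega_{eff}=1-3\Omega_{\Lambda0}$. That restriction contradicts the model in which this Observation sits: in the constant-coefficients case the observer's fit contains \emph{three} components --- matter, cosmological constant, and the $w=-\frac{1}{3}$ dark energy $\Omega_{E0}$ --- and the latter is dominant, since the paper's first-order solution gives $\Omega_{E0}=\left(H_0/H_{D0}\right)^2=1+O(\langle\tilde{\rho}_0\rangle)$, so that $\Omega_{M0}+\Omega_{\Lambda0}$ is first-order small, nowhere near $1$; your intermediate equality $\Omega_{eff}=1-3\Omega_{\Lambda0}$ is therefore false in the paper's scenario. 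The conclusion survives with a one-line repair: keep the full sum rule $\Omega_{M0}+\Omega_{\Lambda0}+\Omega_{E0}=1$ with all three densities non-negative; since $\Omega_{E0}$ carries weight $3w+1=0$, you still get $\Omega_{eff}=\Omega_{M0}-2\Omega_{\Lambda0}\leq\Omega_{M0}\leq1$ and $\Omega_{eff}\geq-2\Omega_{\Lambda0}\geq-2$, i.e. the same interval $[-2;1]$, with endpoints attained for pure matter and pure cosmological constant respectively. Your closing caveat about radiation is apt and consistent with the paper: the bound is tied to the set of components admitted in the fit, and the constant-coefficients model contains no component with $3w+1>1$, which is precisely why the upper bound $1$ stands.
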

To obtain an explicit expression for the effective density, we will use the following conventions
\begin{center}
	$K(n):=8\pi\left(\frac{1}{3}\mathcal{N}-\frac{1}{n^2}\right)G=8\pi\left(\frac{1}{3(n^2+1)}-\frac{1}{n^2}\right)G$; $K'(n):=8\pi\left(\mathcal{M}-\frac{1}{n^2}\right)G=8\pi\left(\frac{1}{(n^2-1)}-\frac{1}{n^2}\right)G$.
\end{center}
Moreover, we use $\overline{t}$ as most suitable variable, with $\overline{t}_0:=\overline{t}(t_0)$.
\begin{thm}
	If on a spatially flat metric dominated by a $w=-\frac{1}{3}$ dark matter-kind, we put an inhomogeneity of matter $\tilde{\rho}$, the present deceleration parameter of the averaged perturbed metric can be interpreted with an amount of effective density
	\begin{multline}
	\label{rho eff}
		\rho_{eff}(\langle\tilde{\rho}_0\rangle; H_{D0}; n)=\frac{3(1-n)}{16\pi G}H_0^4\overline{t}_0^2\frac{H_0^2\overline{t}_0^2-1}{[(1+K/K')H_0^2\overline{t}_0^2-K/K']^2}\cdot \\
		\cdot\left[\left((3-n)K/K'+(7-n)\right)H_0^2\overline{t}_0^2-\left((5-n)K/K'-(n+3)\right)\right],
	\end{multline}
	such that
	\begin{equation}
	\label{alg sys}
		\begin{cases}
			K'H_0^{n-1}\langle\tilde{\rho}_0\rangle\overline{t}_0^{n+1}+1=H_0^2\overline{t}_0^2 \\
			H_{D0}=\frac{H_0}{2}\frac{(1-n)H_0^2\overline{t}_0^2+(1+n)}{\sqrt{(1+K/K')H_0^2\overline{t}_0^2-K/K' }}
		\end{cases}.
	\end{equation}
\end{thm}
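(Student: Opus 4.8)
The plan is to reduce the whole statement to a single computation: the second derivative of the effective scale factor $\textbf{a}$ with respect to the true time $t$, evaluated at present. Indeed, by the fitting conditions (\ref{dark sys}) the observer's model satisfies $\ddot{a}_D(t_0)=\ddot{\textbf{a}}(t_0)$, and the derivative of his Friedmann law yields the algebraic relation (\ref{alg rel}); in the present single-dominant-component setting the $w=-\tfrac13$ term carries the weight $3w+1=0$ and drops out, so (\ref{alg rel}) reads $\Omega_{eff}=\Omega_{M0}-2\Omega_{\Lambda0}=-2\ddot{\textbf{a}}(t_0)/H_{D0}^2$. Hence $\rho_{eff}=\frac{3H_{D0}^2}{8\pi G}\Omega_{eff}=-\frac{3}{4\pi G}\ddot{\textbf{a}}(t_0)$, the factor $H_{D0}^2$ cancelling. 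Everything therefore comes down to evaluating $\ddot{\textbf{a}}(t_0)$ in closed form and packaging the present-time normalization into the system (\ref{alg sys}).

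Next I would parametrize $\textbf{a}$ and $t$ through the background cosmic time $\overline{t}$. In the constant-coefficient case $a=e^{H_0\tau}=H_0\overline{t}$, and the averages computed above are pure power laws, $2\langle A\rangle=K\,\frac{\langle\tilde{\rho}_0\rangle}{H_0^2}a^{n-1}$ and $2\langle C\rangle=K'\,\frac{\langle\tilde{\rho}_0\rangle}{H_0^2}a^{n-1}$, so that $2\langle A\rangle/2\langle C\rangle\equiv K/K'$ and each obeys $\frac{d}{d\overline{t}}\langle\,\cdot\,\rangle=(n-1)\langle\,\cdot\,\rangle/\overline{t}$. The correspondence (\ref{pert metric}) then reads $\textbf{a}=H_0\overline{t}\sqrt{1-2\langle C\rangle}$ together with $dt=\sqrt{1+2\langle A\rangle}\,d\overline{t}$, i.e.\ $\frac{d}{dt}=(1+2\langle A\rangle)^{-1/2}\frac{d}{d\overline{t}}$; this is the change of variables that turns $t$-derivatives into $\overline{t}$-derivatives of elementary power laws.

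I would then read off (\ref{alg sys}) from the two normalization conditions of (\ref{dark sys}). Imposing $\textbf{a}(t_0)=1$ and squaring gives $H_0^2\overline{t}_0^2\big(1-2\langle C\rangle(\overline{t}_0)\big)=1$, which is exactly the first equation of (\ref{alg sys}) once $2\langle C\rangle$ is written out, and which I will use repeatedly in the decisive form $1-2\langle C\rangle(\overline{t}_0)=(H_0\overline{t}_0)^{-2}$. Applying $\frac{d}{dt}$ to $\textbf{a}=H_0\overline{t}\sqrt{1-2\langle C\rangle}$, using $\frac{d}{d\overline{t}}\langle C\rangle=(n-1)\langle C\rangle/\overline{t}$ and substituting that identity, collapses the radicals and reproduces the second equation $H_{D0}=\dot{\textbf{a}}(t_0)$ in the stated rational form; this already fixes the factor $(1+2\langle A\rangle(\overline{t}_0))^{1/2}=\big[(1+K/K')H_0^2\overline{t}_0^2-K/K'\big]^{1/2}/(H_0\overline{t}_0)$ that reappears squared in the denominator of (\ref{rho eff}).

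Finally, I would differentiate once more. Writing $P:=\frac{d\textbf{a}}{d\overline{t}}$, the chain rule gives $\ddot{\textbf{a}}=(1+2\langle A\rangle)^{-2}\big[(1+2\langle A\rangle)P'-\tfrac12(2\langle A\rangle)'P\big]$, where every factor is an explicit power of $\overline{t}$ times a power of $(1-2\langle C\rangle)$. The endgame is to evaluate this at $\overline{t}_0$, eliminate $\langle C\rangle(\overline{t}_0)$ everywhere through $1-2\langle C\rangle(\overline{t}_0)=(H_0\overline{t}_0)^{-2}$, and substitute $2\langle A\rangle(\overline{t}_0)=(K/K')\,2\langle C\rangle(\overline{t}_0)$ so that $1+2\langle A\rangle(\overline{t}_0)$ becomes the bracket squared in the denominator; multiplying by $-\tfrac{3}{4\pi G}$ then produces $\rho_{eff}$. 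The main obstacle is precisely this last reduction: organizing $\ddot{\textbf{a}}(t_0)$ as a single rational function of the one variable $H_0^2\overline{t}_0^2$ (with parameters $n$ and $K/K'$), since the two differentiations generate several competing powers of $(1-2\langle C\rangle)$ and of $\overline{t}_0$ that must cancel cleanly against the $(H_0\overline{t}_0)^{-2}$ substitution before the numerator collapses to the displayed polynomial linear in $H_0^2\overline{t}_0^2$. Keeping careful track of the coefficients in that final collection, and of the overall prefactor $\tfrac{3(1-n)}{16\pi G}H_0^4\overline{t}_0^2(H_0^2\overline{t}_0^2-1)$, is where an arithmetic slip is easiest to make.
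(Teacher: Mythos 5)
Your plan is exactly the paper's own proof, just spelled out in more detail: the paper likewise writes $\textbf{a}(\overline{t})=H_0\overline{t}\sqrt{1-K'H_0^{n-3}\langle\tilde{\rho}_0\rangle\overline{t}^{n-1}}$ and $dt/d\overline{t}=\sqrt{1+KH_0^{n-3}\langle\tilde{\rho}_0\rangle\overline{t}^{n-1}}$, substitutes these into the first two conditions of (\ref{dark sys}) to obtain (\ref{alg sys}), and then ``derives $\textbf{a}$ again'' to obtain (\ref{rho eff}). Your reduction $\rho_{eff}=-\frac{3}{4\pi G}\ddot{\textbf{a}}(t_0)$ via (\ref{alg rel}), your derivation of both equations of (\ref{alg sys}), and your chain-rule formula $\ddot{\textbf{a}}=(1+2\langle A\rangle)^{-2}\bigl[(1+2\langle A\rangle)P'-\tfrac12(2\langle A\rangle)'P\bigr]$ are all correct, and I verified each of them.

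Be warned, however, that the final step you flagged as delicate does not in fact ``collapse to the displayed polynomial''. Setting $X:=H_0^2\overline{t}_0^2$ and $k:=K/K'$, and eliminating everything through $1-2\langle C\rangle(\overline{t}_0)=1/X$ and $2\langle A\rangle(\overline{t}_0)=k(X-1)/X$, the collection yields
\begin{equation*}
	\ddot{\textbf{a}}(t_0)=\frac{(n-1)H_0^2\,X(X-1)}{4\,\bigl[(1+k)X-k\bigr]^2}\,\Bigl[(1-n)(1+k)X-(3-n)k-(n+1)\Bigr],
\end{equation*}
so the prefactor structure of (\ref{rho eff}) is reproduced, but the bracket is \emph{not} $\bigl[(3-n)k+(7-n)\bigr]X-\bigl[(5-n)k-(n+3)\bigr]$. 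A numerical spot check (e.g.\ $n=\tfrac12$, $k=2$, $K'\langle\tilde{\rho}_0\rangle H_0^{n-3}=10^{-2}$, where the true second derivative can be computed by finite differences) confirms the formula above and shows that the theorem's bracket even produces the wrong sign for $\ddot{\textbf{a}}(t_0)$. The fault is not in your method: the statement as printed appears to carry an arithmetic slip, since it is also inconsistent with the paper's own first-order expansion in the following section (expanding (\ref{rho eff}) at $X\to1$ gives $\rho_{eff}\approx\frac{3(1-n)}{8\pi G}(5K'-K)\langle\tilde{\rho}_0\rangle$, which is not the expression stated there either). So execute your plan as written, but expect to land on the formula above rather than on (\ref{rho eff}) verbatim.
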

\begin{proof}
	Remember

	$\bold{a}(\overline{t})=H_0\overline{t}\sqrt{1-K'H_0^{n-3}\langle\tilde{\rho}_0\rangle\overline{t}^{n-1}}$.

	We can derive it exploiting

	$\frac{dt}{d\overline{t}}=\sqrt{1+KH_0^{n-3}\langle\tilde{\rho}_0\rangle\overline{t}^{n-1}}$,

	and substituting inside the first two equations of (\ref{dark sys}) we get (\ref{alg sys}). Remember the known parameters are $\langle\tilde{\rho}_0\rangle, H_{D0}, n$, so this system determines $\overline{t}_0, H_0$. Deriving $\textbf{a}$ again, we get (\ref{rho eff}).
\end{proof}
This gives the quantity of fictitious matter and cosmological constant an observer would need to justify the measured distortion of deceleration parameter, if there is an average inhomogeneity of matter $\langle\tilde{\rho}_0\rangle$. Substituting the values for $\langle\tilde{\rho}_0\rangle$, it's possible to evaluate the magnitude of these effects.
\begin{oss}
\label{pole vanish}
	The factor $(1-n)$ in the $\rho_{eff}$ compensates for the pole at $n=1$ inside $K'(n)$. This justifies the renormalization we performed for the $\mathcal{M},\mathcal{N}$ integrals.
\end{oss}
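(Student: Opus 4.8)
The plan is to reduce the statement to a single chain-rule differentiation of the averaged perturbed scale factor $\textbf{a}$, carried out in the background proper time $\overline{t}$. First I would assemble the two objects on which the matching conditions act. In the $w=-\frac13$ background one has $a(\overline{t})=H_0\overline{t}$, while (\ref{pert metric}) gives $\textbf{a}=a\sqrt{1-2\langle C\rangle}$ and $dt=a\sqrt{1+2\langle A\rangle}\,d\tau=\sqrt{1+2\langle A\rangle}\,d\overline{t}$. Reading the averages off as $2\langle A\rangle=K(n)\frac{\langle\tilde{\rho}_0\rangle}{H_0^2}a^{n-1}$ and $2\langle C\rangle=K'(n)\frac{\langle\tilde{\rho}_0\rangle}{H_0^2}a^{n-1}$ (the definitions of $K,K'$) and substituting $a=H_0\overline{t}$ yields the two working identities
\[
\textbf{a}(\overline{t})=H_0\overline{t}\sqrt{1-K'H_0^{n-3}\langle\tilde{\rho}_0\rangle\,\overline{t}^{\,n-1}},\qquad \frac{dt}{d\overline{t}}=\sqrt{1+KH_0^{n-3}\langle\tilde{\rho}_0\rangle\,\overline{t}^{\,n-1}}.
\]
It is convenient to set $w(\overline{t}):=K'H_0^{n-3}\langle\tilde{\rho}_0\rangle\,\overline{t}^{\,n-1}$, so that the second radicand is $1+\frac{K}{K'}w$, and to abbreviate $u:=H_0^2\overline{t}_0^2$.

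Next I would impose (\ref{dark sys}), converting every true-time derivative into an $\overline{t}$-derivative through $\frac{d}{dt}=(dt/d\overline{t})^{-1}\frac{d}{d\overline{t}}$. The condition $\textbf{a}(t_0)=1$ is purely algebraic: squaring the first identity at $\overline{t}_0$ gives $u(1-w_0)=1$, which is exactly the first line of (\ref{alg sys}) and is equivalent to the single workhorse relation $w_0=\frac{u-1}{u}$. This relation lets me trade every occurrence of $\langle\tilde{\rho}_0\rangle$ for a rational function of $u$. Differentiating the first identity once, dividing by $dt/d\overline{t}$, and then inserting $w_0=\frac{u-1}{u}$ (so that $1-w_0=\frac1u$ and $1+\frac{K}{K'}w_0=\frac{(1+K/K')u-K/K'}{u}$) collapses $\dot{\textbf{a}}(t_0)$ to $\frac{H_0}{2}\frac{(1-n)u+(1+n)}{\sqrt{(1+K/K')u-K/K'}}$, which is the second line of (\ref{alg sys}). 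These two equations fix the hidden parameters $\overline{t}_0,H_0$ in terms of the data $\langle\tilde{\rho}_0\rangle,H_{D0},n$.

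For the effective density I would use $\rho_{eff}=\frac{3H_{D0}^2}{8\pi G}\Omega_{eff}$ together with $\Omega_{eff}=-2\ddot{a}_D(t_0)/H_{D0}^2=-2\ddot{\textbf{a}}(t_0)/H_{D0}^2$ (the third line of (\ref{dark sys}) and (\ref{alg rel})); then $H_{D0}$ cancels and $\rho_{eff}=-\frac{3}{4\pi G}\ddot{\textbf{a}}(t_0)$. Hence the whole content beyond (\ref{alg sys}) is the single number $\ddot{\textbf{a}}(t_0)$. Writing $\textbf{a},t$ as functions of $\overline{t}$ and using $\ddot{\textbf{a}}=(\textbf{a}''t'-\textbf{a}'t'')/(t')^3$ (primes meaning $d/d\overline{t}$), I would compute $\textbf{a}',\textbf{a}'',t',t''$ from the two radical identities, evaluate at $\overline{t}_0$, and once more eliminate $\langle\tilde{\rho}_0\rangle$ via $w_0=\frac{u-1}{u}$. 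Every half-integer power of $u$ produced by the nested square roots cancels between numerator and $(t')^3$, leaving a rational function of $u$ and $K/K'$; collecting it and multiplying by $-\frac{3}{4\pi G}$ gives (\ref{rho eff}), with the overall factor $(1-n)$ arising from differentiating the $\overline{t}^{\,n-1}$ terms and the denominator $[(1+K/K')u-K/K']^2$ from the surviving powers of $t'$.

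The hard part will be precisely this last step: differentiating two nested radicals twice generates many terms, and the delicate bookkeeping is to apply the first matching relation at exactly the right moment so that all half-integer powers of $u$ and all explicit $n$-dependence buried in $w_0$ disappear, collapsing the answer to the single bracket in (\ref{rho eff}). I would finally verify that the prefactor $(1-n)$ cancels the pole of $K'(n)$ at $n=1$, which is the content of Observation \ref{pole vanish} and retroactively legitimises the analytic continuation of $\mathcal{N},\mathcal{M}$ used to define $K,K'$ at the physical range $n<1$.
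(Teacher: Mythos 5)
Your plan is, in substance, a re-proof of the paper's Theorem rather than of the Observation: the two radical identities for $\textbf{a}(\overline{t})$ and $dt/d\overline{t}$, the substitution into (\ref{dark sys}), and the chain-rule evaluation of $\ddot{\textbf{a}}(t_0)$ are exactly the paper's route to (\ref{alg sys}) and (\ref{rho eff}), and your intermediate steps (e.g.\ $\dot{\textbf{a}}(t_0)=\frac{H_0}{2}\frac{(1-n)u+(1+n)}{\sqrt{(1+K/K')u-K/K'}}$ with $u:=H_0^2\overline{t}_0^2$) check out. But the claim you were actually asked to establish is deferred to a single closing sentence, ``I would finally verify that the prefactor $(1-n)$ cancels the pole of $K'(n)$ at $n=1$,'' and that verification is not a matter of inspecting (\ref{rho eff}): in that formula $K'$ appears only through the ratio $K/K'$, which is \emph{regular} at $n=1$ (indeed $K/K'\to0$), so there is no explicit pole in (\ref{rho eff}) for the $(1-n)$ to kill. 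The divergence of $K'$ enters only implicitly, through the first constraint $K'H_0^{n-1}\langle\tilde{\rho}_0\rangle\overline{t}_0^{\,n+1}=H_0^2\overline{t}_0^2-1$, which at fixed $\langle\tilde{\rho}_0\rangle\neq0$ forces $u\to\infty$ as $n\to1$ and drags $H_0$ along through the second constraint; a naive limit of the coupled system is singular, so your final step, as stated, would not go through.

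The missing idea — and the paper's actual mechanism, carried out right after the Observation — is to exhibit the cancellation at first order in $\langle\tilde{\rho}_0\rangle$, the only regime where the construction is valid anyway. Expanding (\ref{alg sys}) gives $\overline{t}_0=\frac{1}{H_{D0}}+\frac{K+(n+1)K'}{2H_{D0}^3}\langle\tilde{\rho}_0\rangle+o(\langle\tilde{\rho}_0\rangle)$ and $H_0=H_{D0}+\frac{K+nK'}{2H_{D0}}\langle\tilde{\rho}_0\rangle+o(\langle\tilde{\rho}_0\rangle)$, so the factor $H_0^2\overline{t}_0^2-1$ in (\ref{rho eff}) has no zeroth-order term and carries $K'$ \emph{linearly}, while every other factor may be evaluated at zeroth order. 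This collapses the theorem's formula to
\begin{equation*}
\rho_{eff}=3(1-n)\left(-\frac{2-n}{n^2-1}-\frac{1}{3(n^2+1)}-\frac{1-n}{n^2}\right)\langle\tilde{\rho}_0\rangle+o(\langle\tilde{\rho}_0\rangle),
\end{equation*}
where the compensation is now an algebraic identity, $(1-n)\cdot\frac{1}{n^2-1}=-\frac{1}{n+1}$, yielding the finite coefficient $ract(n)$ of (\ref{formula}) and thereby legitimising the analytic continuation of $\mathcal{M},\mathcal{N}$ to $n<1$. Without this expansion (or an equivalent joint asymptotic analysis of (\ref{rho eff}) together with (\ref{alg sys}) as $n\to1$), your proposal asserts the Observation rather than proving it: the gap sits precisely at the one statement under review.
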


\section{Evaluation of dark matter and dark energy magnitude}

\subsection{Numerical values}

Since we assumed inhomogeneities are small, so that we can use the perturbative approach, here it is possible to expand all the theorem's quantities at the first order in $\langle\tilde{\rho}_0\rangle$. From (\ref{alg sys}) we get
\begin{align}
	\overline{t}_0=\frac{1}{H_{D0}}+\frac{K+(n+1)K'}{2H_{D0}^3}\langle\tilde{\rho}_0\rangle+o(\langle\tilde{\rho}_0\rangle), \cr
	H_0=H_{D0}+\frac{K+nK'}{2H_{D0}}\langle\tilde{\rho}_0\rangle+o(\langle\tilde{\rho}_0\rangle).
\end{align}
The factor $(H_0^2\overline{t}_0^2-1)$ inside $\rho_{eff}$ has no zeroth order term, so we must take only the zeroth order term for all the other factors, which simplifies the calculation of
\begin{equation}
	\rho_{eff}=3(1-n)\left(-\frac{2-n}{n^2-1}-\frac{1}{3(n^2+1)}-\frac{1-n}{n^2}\right)\langle\tilde{\rho}_0\rangle+o(\langle\tilde{\rho}_0\rangle).
\end{equation}
We see clearly here how the pole $n=1$ vanishes. We can express our result as
\begin{equation}
\label{formula}
	\frac{\Omega_{M0}-2\Omega_{\Lambda 0}}{\Omega_{IM0}}\cong ract(n):=3\frac{2-n}{n+1}-\frac{1-n}{n^2+1}-\frac{(1-n)^2}{n^2}
\end{equation}
where we called $\Omega_{IM0}:=\frac{8\pi G}{3H_{D0}^2}\langle\tilde{\rho}_0\rangle$ the quantity of inhomogeneous matter. Except for matter and cosmological constant, which are almost fictitious, the only component is the dark energy $\Omega_{E0}:=\Omega_{w0}|_{w=-\frac{1}{3}}$. Remembering $\overline{\Omega}_{E0}=1$, as it was dominant in the background expansion, we have at first order
\begin{equation}
	\Omega_{E0}=\left(\frac{H_0}{H_{D0}}\right)^2=1+\frac{K+nK'}{H_{D0}^2}\langle\tilde{\rho}_0\rangle+o(\langle\tilde{\rho}_0\rangle).
\end{equation}
Since the sum of all $\Omega_{w0}$ is always $1$, we get
\begin{equation}
	\frac{\Omega_{M0}+\Omega_{\Lambda0}}{\Omega_{IM0}}\cong sum(n):=\frac{n+1}{n^2}+\frac{n}{1-n^2}-\frac{1}{3}\frac{1}{n^2+1}.
\end{equation}
This allows us to obtain $\Omega_{\Lambda,M0}$ as function of the perturbation $\Omega_{IM0}$. As we didn't put any cosmological constant in the true universe, and we put the matter only as the inhomogeneous one,
\begin{equation}
	\begin{matrix}
		\Omega_{F\Lambda 0}=\Omega_{\Lambda 0}\cong\frac{sum(n)-ract(n)}{3}\Omega_{IM0} \\
		\Omega_{M0}\cong\frac{2sum(n)+ract(n)}{3}\Omega_{IM0} \\
		\Omega_{FM0}=\Omega_{M0}-\Omega_{IM0}\cong\left(\frac{2sum(n)+ract(n)}{3}-1\right)\Omega_{IM0}
	\end{matrix}
\end{equation}
are the fictitious matter and cosmological constant.  Remembering the condition (\ref{cond n}), we can try to replace $n\cong\frac{2}{3}$; this gives
\begin{equation}
\label{meas}
	\begin{matrix}
		ract(\frac{2}{3})\cong-3.38; \; sum(\frac{2}{3})\cong4.72; \\
		\frac{\Omega_{\Lambda 0}}{\Omega_{IM0}}\cong\frac{27}{10}; \; \frac{\Omega_{M0}}{\Omega_{IM0}}\cong2.02; \; \frac{\Omega_{FM0}}{\Omega_{IM0}}\cong1.02 \; .
	\end{matrix}
\end{equation}
In particular, we obtain $\Omega_{M0}>\Omega_{IM0}$ and $\Omega_{FM0}>0$, so the magnification effect is verified (Observation \ref{magn}).

We can compare the formulas (\ref{meas}) with the most recent measures of cosmological parameters in our universe \cite{measures}.
\begin{center}
	$\Omega_{B0}\cong0.043\pm0.004$; $\Omega_{M0}\cong0.315\pm0.007$; $\Omega_{\Lambda 0}\cong0.685\pm0.007 \; .$
\end{center}
What we find is
\begin{equation}
	\Omega_{IM0}\cong0.156; \; \Omega_{FM0}\cong0.159; \; \Omega_{TDM0}\cong0.113; \; \Omega_{F\Lambda 0}\cong0.421 \; .
\end{equation}
The fraction of actually existing dark matter would be $\frac{\Omega_{TDM0}}{\Omega_{IM0}}\cong72.4\%$, and not the $\frac{\Omega_{M0}-\Omega_{B0}}{\Omega_{M0}}\cong86.3\%$ as is usually believed. Moreover, the most part of the cosmological constant would be fictitious, so that the only quantity which cannot be brought back to inhomogeneity effects is $\Omega_{\Lambda 0}-\Omega_{F\Lambda 0}\cong0.264 \; .$

Remember that all these values are provisional. The quantitative results could change in a model with non constantly expanding background. Comparing Figure 1 with Figure 2 and 3, we can imagine that the inhomogeneity effects could be stronger under a dominance of radiation or matter, since the "echos" result to develop faster (the same shape to get which under constant expansion it needs $\tau-\tau'=50 Gy$, is reached under matter in $20 Gy$ and under radiation in $13 Gy$). The real universe passed a phase of radiation dominance and then of matter dominance, so we can expect higher values for $ract(n), sum(n)$ and a more complete explanation of the dark matter and the cosmological constant.

\subsection{Do we find an inflation-like effect?}

If we are able to evaluate the fictitious quantity of cosmological constant, it would be interesting to obtain its variation during time and check if in the past was bigger. It would provide an explanation for the inflationary theory. So we fix a past instant $t_1$. We put our observer at this time and we wander how much inhomogeneity effect he sees. As in previous calculations (\ref{dark fried}), the observer considers a purely homogeneous model\footnote{Coherently with the previous notation, we write $Q_1:=Q(t_1)$ for any quantity.}
\begin{equation}
\label{dark fried1}
	\left(\frac{\dot{a}_D}{H_{D1}}\right)^2=\sum_w\Omega_{w1}a_D^{-3w-1}.
\end{equation}
Since he lives in $t_1$, its effective expansion parameter is fixed as $a_D(t_1)=1$. This means it is reduced by a factor $\textbf{a}_1:=\textbf{a}(t_1)$ with respect to $\textbf{a}(t)$. The setting of parameters (\ref{dark sys}) become
\begin{equation}
	\begin{cases}
		a_D(t_1):=1=\frac{\bold{a}(t_1)}{\textbf{a}_1} \\
		\dot{a}_D(t_1)=H_{D1}:=\frac{\dot{\bold{a}}(t_1)}{\textbf{a}_1} \\
		\ddot{a}_D(t_1):=\frac{\ddot{\bold{a}}(t_1)}{\textbf{a}_1}
	\end{cases}.
\end{equation}
From them we have the analogous of (\ref{alg sys}) and (\ref{rho eff}):
\begin{multline}
\label{rho eff1}
	\rho_{eff}(\langle\tilde{\rho}_0\rangle; H_{D1}; n)=\frac{3(1-n)}{16\pi G}\frac{H_0^4\overline{t}_1^2}{\textbf{a}_1^4}\frac{H_0^2\overline{t}_1^2-\textbf{a}_1^2}{[(1+K/K')H_0^2\overline{t}_1^2-(K/K')\textbf{a}_1^2]^2}\cdot \\
	\cdot\left[\left((3-n)K/K'+(7-n)\right)H_0^2\overline{t}_1^2-\left((5-n)K/K'-(n+3)\right)\textbf{a}_1^2\right], \; s.t. \\
	\begin{cases}
		K'H_0^{n-1}\langle\tilde{\rho}_0\rangle\overline{t}_1^{n+1}+\textbf{a}_1^2=H_0^2\overline{t}_1^2 \\
		H_{D1}=\frac{H_0}{2}\frac{(1-n)\textbf{a}_1^{-2}H_0^2\overline{t}_1^2+(1+n)}{\sqrt{(1+K/K')H_0^2\overline{t}_1^2-(K/K')\textbf{a}_1^2 }}
	\end{cases}.
\end{multline}
\begin{oss}
	Notice that setting $t_1=t_0$, $\textbf{a}_1=1$ and $H_{D1}=H_{D0}$, and we turn back to (\ref{alg sys}) and (\ref{rho eff}).
\end{oss}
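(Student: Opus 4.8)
The plan is to read the Observation as the degenerate case of the generalized system (\ref{rho eff1}), in which the rescaling by $\textbf{a}_1$ introduced for the past observer at $t_1$ becomes trivial. First I would check that the three stated conditions are not independent but are all forced by the single choice $t_1=t_0$. Indeed $t_0$ is defined by $\textbf{a}(t_0)=1$, so $\textbf{a}_1:=\textbf{a}(t_1)=\textbf{a}(t_0)=1$; and since the observer at $t_1$ fixes $H_{D1}:=\dot{\textbf{a}}(t_1)/\textbf{a}_1$, this reduces to $H_{D1}=\dot{\textbf{a}}(t_0)=H_{D0}$ by the second line of (\ref{dark sys}). The background-time map gives $\overline{t}_1:=\overline{t}(t_1)=\overline{t}(t_0)=:\overline{t}_0$. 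Thus the substitution $t_1\to t_0$ realizes all three replacements simultaneously, and no genuinely new normalization appears.

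Next I would insert $\textbf{a}_1=1$, equivalently $\textbf{a}_1^2=\textbf{a}_1^4=\textbf{a}_1^{-2}=1$, together with $\overline{t}_1=\overline{t}_0$, into the determining system of (\ref{rho eff1}). The constraint $K'H_0^{n-1}\langle\tilde{\rho}_0\rangle\overline{t}_1^{n+1}+\textbf{a}_1^2=H_0^2\overline{t}_1^2$ collapses to the first line of (\ref{alg sys}); in the Hubble relation the weight $\textbf{a}_1^{-2}$ in the numerator and the weight $\textbf{a}_1^2$ on $K/K'$ under the radical both drop out, returning the second line of (\ref{alg sys}) with $H_{D1}$ identified as $H_{D0}$. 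Hence the system determining $\overline{t}_1,H_0$ degenerates exactly into the one determining $\overline{t}_0,H_0$ in the Theorem.

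Finally I would carry the same substitution into the expression for $\rho_{eff}$. The prefactor $H_0^4\overline{t}_1^2/\textbf{a}_1^4$ becomes $H_0^4\overline{t}_0^2$; the middle fraction loses the $\textbf{a}_1^2$ in its numerator and the $\textbf{a}_1^2$ multiplying $K/K'$ in its denominator, yielding $(H_0^2\overline{t}_0^2-1)/[(1+K/K')H_0^2\overline{t}_0^2-K/K']^2$; and in the trailing bracket the term $(5-n)K/K'-(n+3)$ loses its factor $\textbf{a}_1^2$. Collecting these reproduces (\ref{rho eff}) verbatim, which completes the reduction.

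Since every step is a term-by-term specialization, I do not anticipate a real obstacle. The only point deserving care is the consistency recorded in the first step: one must verify that $t_1=t_0$ genuinely drives $\textbf{a}_1\to1$ and $H_{D1}\to H_{D0}$ through the definitions, rather than these being independent stipulations. This also confirms that the $\textbf{a}_1$-weights inserted throughout (\ref{rho eff1}) were placed consistently, so the generalization to an arbitrary past observer contains no algebraic slip.
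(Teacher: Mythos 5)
Your verification is correct and coincides with what the paper intends: the Observation is stated without proof precisely because it is the term-by-term specialization $\textbf{a}_1=1$, $\overline{t}_1=\overline{t}_0$, $H_{D1}=H_{D0}$ that you carry out, and your preliminary check that $t_1=t_0$ forces $\textbf{a}_1=1$ (via the definition $\textbf{a}(t_0)=1$) and $H_{D1}=H_{D0}$ (via the second line of (\ref{dark sys})) is a worthwhile consistency point the paper leaves implicit. Nothing is missing; the reduction reproduces (\ref{alg sys}) and (\ref{rho eff}) exactly as you state.
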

We expand again the quantities in (\ref{rho eff1}) at the first order in $\langle\tilde{\rho}_0\rangle$:
\begin{equation}
	\begin{matrix}
		 H_0=H_{D1}a_1+\frac{K+nK'}{2H_{D1}}\textbf{a}_1^{n-2}\langle\tilde{\rho}_0\rangle+o(\langle\tilde{\rho}_0\rangle); \\
		\Omega_{M1}-2\Omega_{\Lambda 1}=ract(n)\textbf{a}_1^{n-3}\Omega_{IM1}+o(\Omega_{IM1}).
	\end{matrix}
\end{equation}
It is also
\begin{equation}
	\Omega_{E1}=\left(\frac{H_0}{H_{D1}\textbf{a}_1}\right)^2=1+\frac{K+nK'}{H_{D1}^2}\textbf{a}_1^{n-3}\langle\tilde{\rho}_0\rangle+o(\langle\tilde{\rho}_0\rangle)
\end{equation}
from which
\begin{equation}
	\Omega_{M1}+\Omega_{\Lambda1}=sum(n)\textbf{a}_1^{n-3}\Omega_{IM1}+o(\Omega_{IM1}).
\end{equation}
We get the past fictitious matter and cosmological constant as functions of $\Omega_{IM1}$:
\begin{equation}
\label{num1}
	\begin{matrix}
		\Omega_{F\Lambda 1}=\Omega_{\Lambda 1}\cong\frac{sum(n)-ract(n)}{3}\textbf{a}_1^{n-3}\Omega_{IM1} \\
		\Omega_{M1}\cong\frac{2sum(n)+ract(n)}{3}\textbf{a}_1^{n-3}\Omega_{IM1} \\
		\Omega_{FM1}=\Omega_{M1}-\Omega_{IM1}\cong\left(\frac{2sum(n)+ract(n)}{3}-1\right)\textbf{a}_1^{n-3}\Omega_{IM1}.
	\end{matrix}
\end{equation}
However, we live in $t_0$ and we are not able to measure $\Omega_{IM1}$, as an observer in $t_1$ does. We should convert our formulas with
\begin{equation}
	\Omega_{IM1}=\left(\frac{H_{D0}}{H_{D1}}\right)^2\textbf{a}_1^{n-3}\Omega_{IM0}.
\end{equation}
The factor $\Omega_{IM0}$ is yet at the first order of the perturbation, so we must evaluate the others at zeroth order, for which

$H_{D0}=H_0+O(\langle\tilde{\rho}_0\rangle)=H_{D1}\textbf{a}_1+O(\langle\tilde{\rho}_0\rangle)$,

so (\ref{num1}) become
\begin{equation}
	\begin{matrix}
		\Omega_{F\Lambda 1}=\Omega_{\Lambda 1}\cong\frac{sum(n)-ract(n)}{3}\textbf{a}_1^{2n-4}\Omega_{IM0} \\
		\Omega_{M1}\cong\frac{2sum(n)+ract(n)}{3}\textbf{a}_1^{2n-4}\Omega_{IM0} \\
		\Omega_{FM1}\cong\left(\frac{2sum(n)+ract(n)}{3}-1\right)\textbf{a}_1^{2n-4}\Omega_{IM0}
	\end{matrix}
\end{equation}
Evaluating again $n\cong\frac{2}{3}$, the time dependence of the cosmological constant is
\begin{equation}
	\Omega_{F\Lambda 1}\cong\frac{27}{10}\textbf{a}_1^{-\frac{8}{3}}\Omega_{IM0}
\end{equation}
So $\Omega_{\Lambda}$ was pretty greater in the past, which can be interpreted as an inflationary epoch. It was even too much large, according to these equations. Assuming $\Omega_{IM0}=\Omega_{B0}\cong0.043\pm0.004$ as we see, the cosmological constant reach the maximum physical value $\Omega_{F\Lambda 1}=1$ at
\begin{equation}
	\textbf{a}_1=\textbf{a}_{mmin}:=\left(\frac{27}{8}\Omega_{IM0}\right)^{\frac{3}{8}}\cong0.446 \; .
\end{equation}
Before this instant, our equations for $\Omega_{F\Lambda 1}$ are no more valid. We can heuristically state that in the previous epoch the cosmological constant is completely dominant $\Omega_{F\Lambda 1}=1$. This until we reach the natural cut off (\ref{min}). With our parameters, it is
\begin{equation}
	\textbf{a}_{min}=\left[3\left(\frac{1}{n^2}-\frac{1}{n^2+1}\right)\Omega_{IM0}\right]^{\frac{1}{1-n}}\cong(4.67\Omega_{IM0})^3\cong0.00811 \; .
\end{equation}
We can summarize our knowledge about the cosmological constant variations as
\begin{equation}
	\Omega_{F\Lambda}(t)\cong\begin{cases}
		unknown \; for \; \textbf{a}(t)<8.11\cdot10^{-3} \\
		1 \; for \; 8.11\cdot10^{-3}<\textbf{a}(t)<0.446 \\
		0.116\cdot \textbf{a}(t)^{-\frac{8}{3}} \; for \; a(t)>0.446
	\end{cases}.
\end{equation}
We can interpret the second epoch as inflation.

\section{Conclusions}

In this work we revalued the train of thoughts of \cite{sergio}, beyond its lacks, developing it in a truly general relativistic context and for a completely generic distribution of inhomogeneities. We got that the real amount of the dark matter content in our
universe is lower than predicted by the Cosmological Standard Model, due to retarded relativistic effects, as we expected. Performing the formalism, we were lead also to consider the possibility of a similar correction also for the cosmological constant and, observing its variations, even an explanation for the inflationary scenario arose.

Our calculation is approximate in more senses, so for reliable numerical values a more deep formalism is required. Above all, it is required to get gravitational retarded potential also for not constantly expanding universe, as the real case is. However, as we showed, it is reasonable to believe that in the real case the same effects would arise, with even greater magnitude, which could explain the most part of the dark matter and cosmological constant usually required. Another possible deepening could be the consideration of some spatial curvature and its consequences. All these complications will need numerical algorithms to be performed.

A possible generalization of the present work could be also an evaluation of the gravitational forces due to retarded potentials $\tilde{g}_{\mu\nu}$. The comparison to the attraction of dark matter in galaxies or cluster could justify it completely or partially. To obtain a theoretical prevision of the dark matter distribution, mathematical tools to describe the baryonic matter distribution would be necessary. Since the matter inhomogeneity seems to have a fractal shape \cite{fract}, it could be useful a study of singular distributions as sources in General Relativity, and their application to cosmology.

The study of solutions of (\ref{PDE}) without source could be useful for the interpretation of ancient gravitational waves, since universe has expanded from their emission.

\medskip
\textbf{Acknowledgments}. We thank Oliver Piattella, Mariateresa Crosta and Marco Giammaria for useful discussions.

\clearpage

\addcontentsline{toc}{chapter}{Bibliografia}

\end{document}